\documentclass{article}

\usepackage{soul,graphicx,amsmath,amsthm,a4wide,hyperref}

\newtheorem{theorem}{\bf Theorem}[section]
\newtheorem{lemma}{\bf Lemma}[section]
\newtheorem{prop}{\bf Proposition}[section]


\begin{document}

\title{Revisiting the exclusion principle in epidemiology at its ultimate limit}

\author{Nir Gavish\footnote{Faculty of Mathematics, Technion, Haifa, 32000, Israel; \url{ngavish@technion.ac.il}}}
\date{}
\maketitle

\begin{abstract}
The competitive exclusion principle in epidemiology implies that when competing strains of a pathogen provide complete protection for each other, the strain with the largest reproduction number outcompetes the other strains and drives them to extinction.  The introduction of various trade-off mechanisms may facilitate the coexistence of competing strains, especially when their respective basic reproduction numbers are close so that the competition between the strains is weak.  Yet, one may expect that a substantial competitive advantage of one of the strains will eventually outbalance trade-off mechanisms driving less competitive strains to extinction.  The literature, however, lacks a rigorous validation of this statement.\\
In this work, we challenge the validity of the exclusion principle at an ultimate limit in which one strain has a vast competitive advantage over the other strains.  We show that when one strain is significantly more transmissible than the others, and under broad conditions, an epidemic system with two strains has a stable endemic equilibrium in which both strains coexist with comparable prevalence.  Thus, the competitive exclusion principle does not unconditionally hold beyond the established case of complete immunity.  
\end{abstract}
\section{Introduction}

One of the important principles of theoretical biology is the {\em competitive exclusion principle} which states that no two species can indefinitely occupy the same ecological niche~\cite{levin1970community,doi:10.1126/science.131.3409.1292,gause1934experimental,volterra1927variazioni}. Many epidemiological systems, e.g., Dengue, Influenza or Malaria, consist of multiple strains of a pathogen which all compete for susceptibles, see~\cite{martcheva2015introduction,Keeling-book} and references within.   The competitive exclusion principle in epidemiology implies that when competing strains of a pathogen provide complete protection for each other, the strain with the largest reproduction number outcompetes the other strains and drives them to extinction~\cite{bremermann1989competitive}.  The result of~\cite{bremermann1989competitive} applies to directly transmitted infections with cross-immunity in a homogeneous population. Subsequent works showed that the exclusion principle extends to a broad family of epidemic systems including those described by vector-borne disease models~\cite{feng1997competitive,zheng2020competitive,cai2013competitive}, age-structured models~\cite{martcheva2013competitive}, network structures~\cite{wang2022competitive}, and also to models incorporating distributed delays and environmental transmission~\cite{cai2013competitive,dang2016competitive}.

In various cases, e.g., Dengue or Malaria, multiple strains of pathogens of infectious diseases stably coexist for long periods despite competing with each other, see~\cite{martcheva2015introduction,Keeling-book} and references within. 
Extensive research efforts have been dedicated to understanding how such competing species coexist rather than following the competitive exclusion principle and driving less competitive strains to extinction.  These studies showed coexistence is possible when strain interactions and model dynamics extends beyond complete cross-immunity.  Particularly, these studies revealed
numerous trade-off mechanisms that may enable the coexistence of pathogen variants~\cite{martcheva2008vaccine,thieme2007pathogen}.


{\em Partial cross-immunity}, for example, is a trade-off mechanism generating an immune response after infection that provides partial protection against subsequent infections with related strains, but not complete protection. As a result, competition is reduced, facilitating coexistence between strains~\cite{chung2016dynamics}. Similarly, super-infection and co-infection reduce competition and facilitate coexistence by enabling an infected individual to be infected by other strains before recovery~\cite{martcheva2013linking,candela2009coinfection}. Other examples of trade-off mechanisms involve adaptation by mutation, quarantine or age structure~\cite{nuno2005dynamics,nuno2008mathematical,thieme2007pathogen,kuddus2022analysis}. 

The above studies of trade-off mechanisms, however, consider cases in which the basic reproduction numbers of the two strains are comparable, and, therefore, the competition between the two strains is relatively weak.  
For example, the study of a multi-strain system in~\cite{white1998cross} shows that the coexistence of strains is possible when their {\em respective basic reproduction numbers are close and cross-immunity is weak}. 
Similarly, a series of works shows how strain structure and partial cross-immunity between {\em strains with close respective basic reproduction numbers} can lead to long-period oscillatory dynamics without external forcing, see~\cite[Sec. 4.1.4.2.]{Keeling-book} and references within.  The arising picture is that  weaker competition implies weaker competitive exclusion, allowing various trade-off mechanisms to dominate dynamics. 

Intuitively, the competitive exclusion principle should be most valid and robust in the limit of fierce competition between the strains.  
Thus, if one of the strains has a significantly larger competitive advantage over the other, one may expect it will outbalance trade-off mechanisms and become the dominant and eventually the sole strain. However, to the best of our knowledge, the literature lacks a rigorous validation of this statement. This is the focus of this study.

In this work, we test the validity of the exclusion principle in an ultimate regime in which one strain has a vast competitive advantage over the other strains, in the sense that one of the strains is significantly more transmissible than the other strains.   
We address the following questions:\\
{\bf 1)} Does the competitive exclusion principle always hold when one strain has a vast competitive advantage over the other strains? Namely, beyond the established case of complete cross-immunity, would a sufficiently large competitive advantage of one of the strain outbalance trade-off mechanisms?

Our findings are counter-intuitive and show that in a very general setting, the competitive exclusion principle does not hold, unconditionally, beyond the case of complete cross-immunity. Rather, we show that such systems may have a stable endemic equilibrium in which both species coexist.  This raises further questions,

{\bf 2)} What trade-off mechanisms contribute to coexistence, even when one strain has a vast competitive advantage over the others?

Our results reveal the key role of partial cross-immunity in enabling coexistence.

\section{Methods}
\subsection{Transmission model}\label{subsec:transmission}
\begin{figure}[ht!]
\begin{center}
\includegraphics[width=0.55\textwidth]{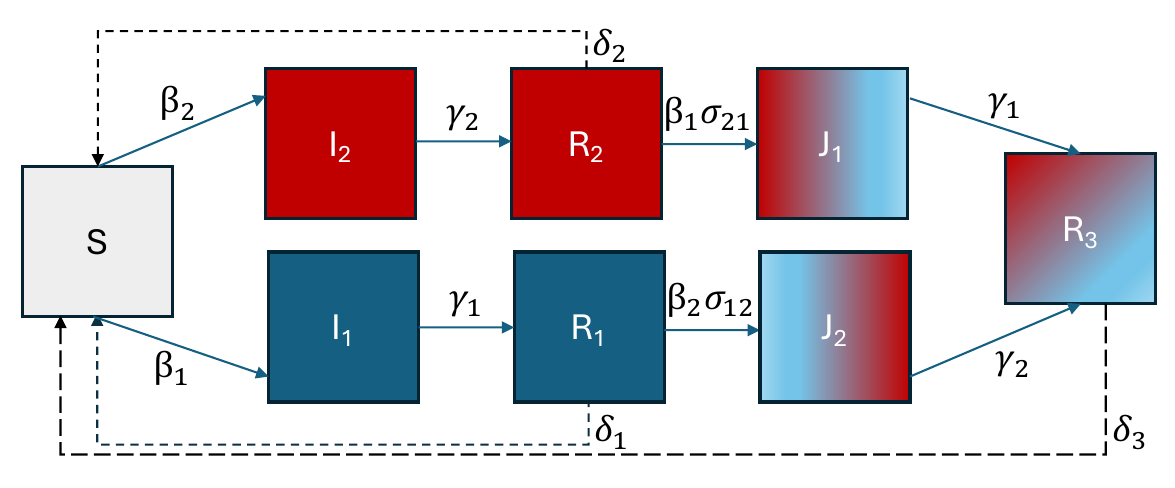}
\caption{Schematic diagram of disease dynamics when the host is exposed to two co-circulating strains. Susceptibles (S) may be infected with strain~$i=1$ or~$i=2$ ($I_i$, primary infection).  Those recovered from strain~$i$ ($R_i$, as a result of primary infection) are temporarily immune to reinfections by strain~$i$ but are possibly susceptible to infections by strain~$j\ne i$ ($J_j$, secondary infection).  Here,~$\sigma_{ij}$ is the relative susceptibility to strain~$j$ for an individual previously infected with and recovered from strain~$i$ ($i\ne j$), so that~$\sigma_{ij}=0$ corresponds to total cross-immunity,~$0<\sigma_{ij}<1$ corresponds to partial cross-immunity and~$\sigma_{ij}>1$ corresponds to enhanced susceptibility.}
\label{fig:diagram}
\end{center}
\end{figure} 

 We consider a two-strain epidemic model in which the strains can interact indirectly via the immune response generated following infections, and in which this immune response possibly wanes with time, see diagram in Figure~\ref{fig:diagram}.
The mathematical model is based on the model presented in~\cite{castillo1989epidemiological,martcheva2015introduction}, and extended to account for waning immunity and possible enhanced or reduced infectivity following an infection.  Each individual resides in one of eight compartments: susceptibles (S), those infected with strain~$i$ ($I_i$, primary infection), those infected with strain~$i$ ($J_i$, secondary infection), after having been previously infected (and recovered) by strain~$j\ne i$, those recovered from strain~$i$ ($R_i$, as a result of primary infection), and those recovered from both strains ($R_3$). The population is assumed to mix randomly.  The model is given by
\begin{subequations}\label{eq:model} 
\begin{align}
        \frac{\text{d}S}{\text{d}t} &=\mu -\sum_{i=1}^{2}\beta_i(I_i+\eta_iJ_i)S-\mu  S+\sum_{k=1}^{3}\delta_k  R_k,\label{eq:S}\\[1ex]
         \frac{\text{d}I_1}{\text{d}t}  &= \beta_1S(I_1+\eta_{1}J_1)-(\mu +\gamma_1)I_1,\label{eq:eqI1}\\[1ex]
         \frac{\text{d}I_2}{\text{d}t}  &= \beta_2S(I_2+\eta_{2}J_2)-(\mu +\gamma_2)I_2,\label{eq:eqI2}\\[1ex]
         \frac{\text{d}J_1}{\text{d}t}  &= \beta_1\sigma_{21}R_2(I_1+\eta_{1}J_1)-(\mu +\gamma_1)J_1,\label{eq:eqJ1}\\[1ex]
         \frac{\text{d}J_2}{\text{d}t}  &= \beta_2\sigma_{12}R_1(I_2+\eta_{2}J_2)-(\mu +\gamma_2)J_2,\label{eq:eqJ2}\\[1ex]
         \frac{\text{d}{R_1}}{\text{d}t} &= \gamma_1I_1-\beta_2\sigma_{12}(I_2+\eta_{2}J_2)R_1-(\mu +\delta_1)R_1,\label{eq:eqR1}\\[1ex]
         \frac{\text{d}{R_2}}{\text{d}t} &= \gamma_2I_2-\beta_1\sigma_{21}(I_1+\eta_{1}J_1)R_2-(\mu +\delta_2)R_2,\label{eq:eqR2}\\[1ex]
               \frac{\text{d}R_3}{\text{d}t} &= \gamma_1J_1+\gamma_2J_2-(\mu  +\delta_3)R_3,
        \end{align}
\end{subequations}
were~$\mu$ is the rate at which individuals are born, as well as the mortality rate,~$\beta_i>0$ denotes the transmission coefficient for strain~$i$,~$\gamma_i$ denotes the recovery rate from strain~$i$,~$\delta_i$ is the rate at which immunity to re-infection by strain~$i$ wanes and~$\delta_3$ is the rate at which immunity to re-infection following infection by both strains wanes.  The relative infectivity of those infected with strain~$i$ after previous infection with strain~$j\ne i$ is given by~$\eta_i$.  Finally,~$\sigma_{ij}$ is the relative susceptibility to strain~$j$ for an individual previously infected with and recovered from strain~$i$ ($i\ne j$), so that~$\sigma_{ij}=0$ corresponds to total cross-immunity,~$0<\sigma_{ij}<1$ corresponds to partial cross-immunity and~$\sigma_{ij}>1$ corresponds to enhanced susceptibility. 

\subsubsection{Assumptions on model parameters}
The feasible range of the problem parameters is
\begin{subequations}\label{eq:condParms}
\begin{equation}
\beta_i>0,\quad \gamma_i>0,\quad \mu\ge0,\quad \delta_k\ge0,\quad
\eta_i> 0,\quad \sigma_{ij}\ge0,\qquad i=1,2, \quad  j\ne i \quad k=1,2,3.
\end{equation}
In what follows, we assume that the susceptible group is replenished by demographic turnover ($\mu>0$) and/or by waning of the immune response generated following infections~($\delta_k>0$),
\begin{equation}
\max\{\mu,\delta_1,\delta_2,\delta_3\}>0.
\end{equation}
\end{subequations}
This condition enables the system to converge to an endemic equilibrium, rather than gradually exhausting the susceptible pool and converging to a disease-free state.

\subsubsection{Basic reproduction number}\label{sec:R0}
A standard computation of the next generation matrix~\cite{martcheva2015introduction,diekmann1990definition,van2002reproduction} yields that the basic reproduction number equals
\[
\mathcal{R}_0=\max\{\mathcal{R}_1,\mathcal{R}_2\}
\]
where~$\mathcal{R}_i$ is the basic reproduction number of strain~$i$
\begin{equation}\label{eq:Ri}
\mathcal{R}_i=\frac{\beta_i}{\mu +\gamma_i}.
\end{equation}
See Appendix~\ref{app:R0} for details.

\subsection{Analysis in the case of a highly transmissible strain}\label{subsec:phiCE}
We focus on the case in which strain one has a vast competitive advantage over strain two, and ask whether, in such a case, the system can give rise to long-term dynamics in which both strains coexist.  In particular, we consider the case in which strain one is significantly more transmissible than strain two,~$\beta_1\gg 1$,
so that in terms of its basic reproduction number,
\begin{equation}\label{eq:asymptotic_regime_I}
\mathcal{R}_1\gg1.
\end{equation}

The following Theorem characterizes single-strain endemic equilibrium points of~\eqref{eq:model} in the asymptotic regime~\eqref{eq:asymptotic_regime_I}.
\begin{theorem}\label{thm:EE1}
Let~$\mu$, $\{\gamma_i\}_{i=1}^2$, $\{\delta_k\}_{k=1}^3$, $\{\eta_i\}_{i=1}^2$,~$\sigma_{12}$ and~$\sigma_{21}$ 
be parameters that satisfy~\eqref{eq:condParms}.  Then, for sufficiently large~$\beta_1>\gamma_1$,
\begin{enumerate}
\item 
The system~\eqref{eq:model} has a unique single-strain endemic equilibrium~$\phi^{EE,1}$ with~$I_1>0$ and~$I_2=0$ that satisfies
\[
\begin{split}
&S^{EE,1}=\frac1{\mathcal{R}_1},\quad I_1^{EE,1}=\frac{\delta_1+\mu }{1+\delta_1+\mu }\frac{\mathcal{R}_1-1}{\mathcal{R}_1},\quad R_1^{EE,1}=\frac{1}{\delta_1+\mu }I_1^{EE,1},
\quad I_2^{EE,1}=J_1^{EE,1}=J_2^{EE,1}=R_2^{EE,1}=0,
\end{split}
\]
where~$\mathcal{R}_i$ is given by~\eqref{eq:Ri}.
\item The solution~$\phi^{EE,1}$ is locally unstable when
\begin{equation}\label{eq:IEE1unstable}
\hat{\mathcal{R}}^1_2:=\frac{\sigma_{12}\gamma_1\eta_2}{\gamma_1+\delta_1+\mu }\mathcal{R}_2+\left[1-\frac{\sigma_{12}\gamma_1\eta_2}{\gamma_1+\delta_1+\mu }\right]\frac{\mathcal{R}_2}{\mathcal{R}_1}>1,
\end{equation}
and locally stable when~$\hat{\mathcal{R}}^1_2<1$.
\item If~$\mathcal{R}_2>1$, the system~\eqref{eq:model} has an additional unique single-strain endemic equilibrium~$\phi^{EE,2}$ with~$I_2>0$ and~$I_1=0$ which satisfies
\[
\begin{split}
&S^{EE,2}=\frac1{\mathcal{R}_2},\quad I_2^{EE,2}=\frac{\delta_2+\mu }{1+\delta_2+\mu }\frac{\mathcal{R}_2-1}{\mathcal{R}_2},\quad R_2^{EE,2}=\frac{1}{\delta_2+\mu }I_1^{EE,2},
\quad I_1^{EE,2}=J_1^{EE,2}=J_2^{EE,2}=R_1^{EE,2}=0.
\end{split}
\]
This solution is unstable.
\end{enumerate}
\end{theorem}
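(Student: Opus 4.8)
The plan is to handle the three parts in turn, with the stability dichotomy of part~2 as the technical heart. For the existence statements (parts~1 and~3) I would first show that imposing $I_2=0$ at an equilibrium collapses all remaining strain-two and secondary compartments. Setting $I_2=0$ in~\eqref{eq:eqI2} leaves $\beta_2\eta_2 S J_2=0$, and since any equilibrium with $I_1>0$ has $S>0$ by~\eqref{eq:eqI1}, this forces $J_2=0$; substituting $I_2=J_2=0$ into~\eqref{eq:eqR2} forces $R_2=0$ because the coefficient $\beta_1\sigma_{21}(I_1+\eta_1J_1)+\mu+\delta_2$ is positive, then~\eqref{eq:eqJ1} gives $J_1=0$ and the $R_3$ equation gives $R_3=0$. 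What remains is the classical single-strain SIRS fixed-point problem in $(S,I_1,R_1)$: equation~\eqref{eq:eqI1} with $I_1>0$ yields $S^{EE,1}=1/\mathcal{R}_1$, the equation~\eqref{eq:eqR1} makes $R_1$ proportional to $I_1$, and substituting into~\eqref{eq:S}---equivalently, invoking the conservation law $S+I_1+\dots+R_3\to1$ obtained by summing~\eqref{eq:model}---determines $I_1$ uniquely and positively precisely when $\mathcal{R}_1>1$, which holds for sufficiently large $\beta_1$. Part~3 is the mirror image under interchange of the strain indices, producing $\phi^{EE,2}$ whenever $\mathcal{R}_2>1$.

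For the stability statement in part~2, the decisive observation is that the Jacobian of~\eqref{eq:model} at $\phi^{EE,1}$ is block triangular. Because $I_2,J_2,R_2,J_1,R_3$ all vanish at $\phi^{EE,1}$, the linearized equations for these five coordinates contain no perturbation of the ``core'' triple $(S,I_1,R_1)$, while the core equations do couple to them; hence the spectrum of the $8\times8$ Jacobian is the union of the spectra of the $3\times3$ core block and a $5\times5$ invasion block. The core block is exactly the linearization of the single-strain SIRS model at its endemic equilibrium, and a direct Routh--Hurwitz check shows its eigenvalues all have negative real part (when $\mu=0$ one first discards the conservation-law zero eigenvalue by restricting to the invariant hyperplane $N=1$). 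Reordering the invasion block as $(I_2,J_2,R_2,J_1,R_3)$ makes it block triangular as well: the $R_2$, $J_1$ and $R_3$ coordinates contribute the manifestly nonpositive diagonal entries $-(\beta_1\sigma_{21}I_1^{EE,1}+\mu+\delta_2)$, $-(\mu+\gamma_1)$ and $-(\mu+\delta_3)$, so the entire stability question reduces to the leading $2\times2$ block $M$ that couples $(I_2,J_2)$.

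It then remains to study the $2\times2$ matrix $M$, whose entries are built from $\beta_2 S^{EE,1}$ and $\beta_2\sigma_{12}\eta_2 R_1^{EE,1}$. A short computation in which the off-diagonal product cancels against part of the diagonal product yields $\det M=(\mu+\gamma_2)\big[(\mu+\gamma_2)-\beta_2 S^{EE,1}-\beta_2\sigma_{12}\eta_2 R_1^{EE,1}\big]$; inserting $S^{EE,1}=1/\mathcal{R}_1$ and the explicit $R_1^{EE,1}$ reduces the sign condition to the algebraic identity $\det M<0\iff\hat{\mathcal{R}}^1_2>1$. Since a $2\times2$ matrix with negative determinant has a positive real eigenvalue, this establishes instability; conversely $\hat{\mathcal{R}}^1_2<1$ gives $\det M>0$ together with $\mathrm{tr}\,M<0$ (the latter from the same inequality $\beta_2 S^{EE,1}+\beta_2\sigma_{12}\eta_2 R_1^{EE,1}<\mu+\gamma_2$), so the whole spectrum lies in the open left half-plane and $\phi^{EE,1}$ is stable. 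The instability of $\phi^{EE,2}$ in part~3 then follows from the mirror invasion threshold $\hat{\mathcal{R}}^2_1$: for sufficiently large $\beta_1$ one has $\mathcal{R}_1\to\infty$, so $\hat{\mathcal{R}}^2_1>1$ and the corresponding $2\times2$ invasion block has a positive eigenvalue regardless of the cross-immunity parameters.

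I expect the main obstacle to be the organizational bookkeeping that exposes the two nested layers of block-triangular structure, together with the algebraic identity $\det M<0\iff\hat{\mathcal{R}}^1_2>1$; once the Jacobian is correctly arranged the remaining eigenvalue estimates are routine. A secondary point needing care is the degenerate parameter regimes---for instance $\sigma_{21}=0$ together with $\mu=\delta_2=0$, or $\mu=\delta_3=0$---where some diagonal entries vanish and the equilibria may fail to be isolated; these non-generic cases must be excluded or treated separately on the conserved hyperplane.
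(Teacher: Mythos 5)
Your proposal is correct, and on the stability part it takes a genuinely different route from the paper. For existence (parts 1 and 3) the two arguments are essentially the same up to ordering: the paper first gets $R_2=0$ from the steady-state identity $J_1'+R_2'=0$ and then deduces $J_1=J_2=0$, whereas you extract $J_2=0$ first from~\eqref{eq:eqI2} (using $S>0$) and then collapse $R_2,J_1,R_3$; both then solve the residual SIRS fixed-point problem. For part 2, however, the paper does not touch the full Jacobian at all: it computes the invasion number $\hat{\mathcal{R}}^1_2$ as the spectral radius of the next-generation matrix of the strain-two infected subsystem $(I_2,J_2)$ linearized at $\phi^{EE,1}$, and reads the stability dichotomy~\eqref{eq:IEE1unstable} off the threshold $\rho(FV^{-1})\gtrless1$, implicitly invoking the standard NGM/linearization equivalence. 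You instead exhibit the two nested layers of block-triangular structure in the $8\times8$ Jacobian explicitly and reduce everything to the $2\times2$ block $M$, whose determinant identity $\det M=(\mu+\gamma_2)\bigl[(\mu+\gamma_2)-\beta_2S^{EE,1}-\beta_2\sigma_{12}\eta_2R_1^{EE,1}\bigr]$ I have checked (the off-diagonal product indeed cancels), and which recovers exactly $\det M<0\iff\hat{\mathcal{R}}^1_2>1$, with $\mathrm{tr}\,M<0$ handling the stable direction. Your approach buys two things the paper's terse proof omits: it is self-contained (no appeal to NGM theory), and---more importantly---it actually verifies the rest of the spectrum (the SIRS core block via Routh--Hurwitz and the three extra diagonal entries), which is logically required for the claim that $\phi^{EE,1}$ is \emph{stable} when $\hat{\mathcal{R}}^1_2<1$; the paper's invasion-number computation alone only addresses the strain-two directions. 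You are also right to flag the degenerate corners: when $\sigma_{21}=0$ and $\mu=\delta_2=0$ the equation for $R_2$ degenerates and one gets a continuum of boundary equilibria, so the uniqueness claim needs $\beta_1\sigma_{21}I_1+\mu+\delta_2>0$; the paper's step deducing $R_2=0$ from $J_1'+R_2'=0$ silently assumes $\mu+\delta_2>0$, and likewise the zero eigenvalue from the conservation law at $\mu=0$ is never discussed there, whereas you correctly restrict to the invariant hyperplane $N=1$. For part 3 your mirror argument matches the paper's: the paper computes $\hat{\mathcal{R}}^2_1=\bigl[1+\tfrac{\sigma_{21}\gamma_2\eta_1}{\gamma_2+\delta_2+\mu}(\mathcal{R}_2-1)\bigr]\tfrac{\mathcal{R}_1}{\mathcal{R}_2}$, which diverges as $\beta_1\to\infty$, exactly as your determinant argument via $\beta_1S^{EE,2}\to\infty$ shows.
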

\begin{proof}
See Appendix~\ref{app:proofLemma21}.
\end{proof}
Theorem~\ref{thm:EE1} implies that when~$\mathcal{R}_1\gg1$ and the invasion number,~$\hat{\mathcal{R}}^1_2$, of strain two is bigger than one, then both single-strain endemic equilibrium points, if exist, are unstable.  The disease-free equilibrium is also unstable in the regime~$\mathcal{R}_1\gg1$, see Section~\ref{sec:R0}.  Thus, the system may potentially give rise to coexistence in the regime~$\mathcal{R}_1\gg1$ and~$\hat{\mathcal{R}}^1_2>1$.  However, instability of the disease-free equilibrium and the single-strain endemic equilibria does not assure the persistence of the two strains~\cite{margheri2003some}.  Moreover, if system~\eqref{eq:model} does give rise to coexisting strains in the long run, Theorem~\ref{thm:EE1} does not provide any information on the nature of coexistence, e.g., does the system oscillate between infection waves of different strains, converge to an endemic equilibrium or develop chaos.   

The following theorem complements the results of Theorem~\ref{thm:EE1} by characterizing cases in which the system of study has a stable endemic equilibrium in which both strains coexist, despite the vast competitive advantage of strain one:
\begin{theorem}\label{thm:phi*stability}
Let~$\beta_2$,~$\mu$, $\{\gamma_i\}_{i=1}^2$, $\{\delta_k\}_{k=1}^3$, $\{\eta_i\}_{i=1}^2$,~$\sigma_{12}$ and~$\sigma_{21}$ 
be parameters that satisfy~\eqref{eq:condParms}.
If,~$\sigma_{12}>0$ and \begin{equation}\label{eq:beta2cond}
\mathcal{R}_2>\frac{\gamma_1+\delta_1+\mu}{\gamma_1\eta_2\sigma_{12}},
\end{equation}
where~$\mathcal{R}_i$ is given by~\eqref{eq:Ri},
then for sufficiently large~$\beta_1>\gamma_1$ 
\\1) There exists a unique steady-state solution $$\phi^*=(S^*,I_1^*,I_2^*,J_1^*,J_2^*,R_1^*,R_2^*,R_3^*)$$ of the system~\eqref{eq:model} with~$I_1^*>0$ and~$I_2^*>0$ that satisfies
\begin{subequations}\label{eq:approxCE}
\begin{equation}\label{eq:series_S}
S^*=\frac1{\mathcal{R}_1}-\frac{s_2}{\mathcal{R}_1^2}+\mathcal{O}\left(\frac1{\mathcal{R}_1^3}\right),\quad 
\quad I_1^*=a_1-\frac{b_1}{\mathcal{R}_1} +\mathcal{O}\left(\frac1{\mathcal{R}_1^2}\right),\quad I_2^*=\frac{b_2}{\mathcal{R}_1}+\mathcal{O}\left(\frac1{\mathcal{R}_1^2}\right),
\end{equation}
\begin{equation}\label{eq:J2R1_func_S_Ii}
J_2^* = \frac{\gamma_1}{\gamma_2+\mu }I_1^* -\frac{\mu +\delta_1}{\gamma_2+\mu }R_1^*=\frac{\gamma_1a_1}{\gamma_2+\mu}-\frac{\delta_1+\mu}{\eta_2(\gamma_2+\mu)\sigma_{12}\mathcal{R}_2}+\mathcal{O}\left(\frac1{\mathcal{R}_1}\right),\qquad R^*_1=\frac{1-\mathcal{R}_2S^*}{\eta_2\mathcal{R}_2\sigma_{12}}=\frac{1}{\eta_2\sigma_{12}\mathcal{R}_2}+\mathcal{O}\left(\frac1{\mathcal{R}_1}\right)
\end{equation}
\begin{equation}\label{eq:b2a1}
b_2=\frac{\gamma_1\eta_2(\delta_3+\mu)}{\gamma_1\gamma_2+(\gamma_1+\gamma_2+\mu)(\delta_3+\mu)}\left[\mathcal{R}_2-\frac{\gamma_1+\delta_1+\mu}{\gamma_1\eta_2\sigma_{12}}\right],\quad 
a_1=\frac{(\gamma_2+\mu)\sigma_{12}b_2+\delta_1+\mu}{\gamma_1\eta_2\sigma_{12}\mathcal{R}_2},
\end{equation}
where for~$\sigma_{21}>0$
\begin{equation}\label{eq:J1R2_func_S_Ii_sigma21_positive}
J_1^*=\frac{\gamma_2}{\gamma_1+\mu } I_2^* -\frac{\mu +\delta_2}{\gamma_1+\mu }R_2^*=\frac{\gamma_2b_2}{\gamma_1+\mu}\frac1{\mathcal{R}_1}+\mathcal{O}\left(\frac1{\mathcal{R}_1^2}\right),\qquad R_2^*=\frac{1-\mathcal{R}_1S^*}{\eta_1\mathcal{R}_1\sigma_{21}}=\frac{\gamma_2b_2}{a_1\sigma_{21}(\gamma_1+\mu)}\frac1{\mathcal{R}_1^2}+\mathcal{O}\left(\frac1{\mathcal{R}_1^3}\right),
\end{equation}
\begin{equation}\label{eq:s2b1_sigma21_positive}
s_2=\frac{\gamma_2\eta_1b_2}{(\mu + \gamma_1)a_1},\quad
b_1=\frac{\gamma_2+\mu}{\gamma_1+\mu}b_2+\frac{\sigma_{12}(\gamma_2+\mu)(\delta_3+\mu)+(\gamma_2+\delta_3+\mu)\delta_1-\gamma_2\delta_3}{(\gamma_1+\gamma_2)\delta_3+\gamma_1\gamma_2+(\gamma_1+\gamma_2+\delta_3)\mu+\mu^3}\frac1{\sigma_{12}},
\end{equation}
and for~$\sigma_{21}=0$,
\begin{equation}\label{eq:J1R2_func_S_Ii_sigma21_zero}
J_1^*=0,\qquad R_2^*=\frac{\gamma_2}{\delta_2+\mu}I_2^*=\frac{\gamma_2b_2}{\delta_2+\mu}\frac1{\mathcal{R}_1}+\mathcal{O}\left(\frac1{\mathcal{R}_1^2}\right),
\end{equation}
\begin{equation}\label{eq:s2b1_sigma21_zero}
s_2=0,\quad
b_1=\frac{(\gamma_2+\mu)(\delta_3+\mu)(\gamma_2+\delta_2+\mu)\eta_2\sigma_{12}b_2+(\mu+\delta_2)\left[(\gamma_2+\mu)(\delta_3+\mu)\eta_2\sigma_{12}+\gamma_2(\delta_1-\delta_3)+\delta_1(\delta_3+\mu)\right]}{[(\gamma_1+\gamma_2)\delta_3+\gamma_1\gamma_2+(\gamma_1+\gamma_2+\delta_3)\mu+\mu^3](\delta_2+\mu)\eta_2\sigma_{12}}.
\end{equation}
\end{subequations}
\,\\2) The solution~$\phi^*$ is locally stable.
\end{theorem}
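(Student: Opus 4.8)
The plan is to treat $\epsilon:=1/\mathcal{R}_1=(\mu+\gamma_1)/\beta_1$ as a small parameter and to build $\phi^*$ as a regular perturbation branch emanating from an explicitly solvable singular limit. First I would reduce the eight steady-state equations obtained by setting the right-hand sides of~\eqref{eq:model} to zero. On the coexistence branch $I_1^*,I_2^*>0$ one may divide the $I_i$ and $J_i$ balance equations, which (exactly as in the derivation behind~\eqref{eq:J2R1_func_S_Ii}) expresses $R_1$ as a function of $S$ alone, $R_1=\tfrac{1-\mathcal{R}_2 S}{\eta_2\mathcal{R}_2\sigma_{12}}$, and, when $\sigma_{21}>0$, $R_2=\tfrac{1-\mathcal{R}_1 S}{\eta_1\mathcal{R}_1\sigma_{21}}$; combining these with the $R_1,R_2,R_3$ equations yields $J_2,J_1,R_3$ as explicit functions of $(S,I_1,I_2)$. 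What remains is a closed system of three scalar equations in $(S,I_1,I_2)$ depending on $\epsilon$ (with the eighth equation redundant through the conservation law $N:=S+\sum I+\sum J+\sum R=1$ obtained by summing~\eqref{eq:model}). The degenerate case $\sigma_{21}=0$ is carried in parallel, where instead $J_1=0$ and $R_2=\tfrac{\gamma_2}{\delta_2+\mu}I_2$.

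The leading orders in~\eqref{eq:approxCE} dictate the scalings $S=\epsilon\,\hat S$, $I_2=\epsilon\,\hat I_2$ (and $J_1=\epsilon\hat J_1$, $R_2=\epsilon^2\hat R_2$ when $\sigma_{21}>0$), with $I_1,R_1,J_2,R_3=O(1)$. I would substitute these into the reduced three-equation system and check that after rescaling the apparently singular factors $1/S\sim1/\epsilon$ and the $\mathcal{R}_1$-proportional terms cancel, leaving a map $F(\hat S,I_1,\hat I_2;\epsilon)=0$ that is analytic up to and including $\epsilon=0$ — precisely what the integer-power expansions in~\eqref{eq:approxCE} predict. At $\epsilon=0$ the limiting system is solved explicitly and produces the leading coefficients $\hat S=1$, $I_1=a_1$, $\hat I_2=b_2$ of~\eqref{eq:b2a1}. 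The key non-degeneracy check is that $\partial_{(\hat S,I_1,\hat I_2)}F$ is nonsingular at this limit point; granting it, the analytic implicit function theorem furnishes a unique branch $\phi^*(\epsilon)$ for small $\epsilon$, whose Taylor coefficients reproduce~\eqref{eq:approxCE} order by order. Since every coexistence equilibrium must satisfy the reduced equations, local uniqueness of the branch gives uniqueness of $\phi^*$. Positivity is then immediate: the sign of $b_2$ in~\eqref{eq:b2a1} is governed by its bracketed factor, which is positive precisely under assumption~\eqref{eq:beta2cond}, whence $I_2^*=\epsilon b_2+O(\epsilon^2)>0$, and then $a_1>0$ follows from its formula. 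This settles part~1.

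For part~2 I would linearise~\eqref{eq:model} at $\phi^*$ and study the $8\times8$ Jacobian $\mathbf J(\epsilon)$. The terms proportional to $\beta_1(I_1+\eta_1 J_1)\sim\beta_1 a_1=O(1/\epsilon)$ make the $S$-equation (and, when $\sigma_{21}>0$, the $R_2$-equation) strongly contracting, so $S$ (and $R_2$) are fast variables relaxing on the $O(1/\beta_1)$ time scale. I would separate these fast directions from the slow ones and perform a quasi-steady-state (Schur-complement) reduction: the fast block contributes one or two eigenvalues $\approx-\beta_1 a_1<0$, while the slow eigenvalues are, up to $o(1)$, those of the effective reduced Jacobian obtained by adiabatically eliminating $S$ (and $R_2$). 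Crucially, because the coupling into the $S$-row is itself $O(1/\epsilon)$, this elimination modifies the slow block at $O(1)$, so the reduction must be computed as the genuine Schur complement rather than the slow principal submatrix; it nevertheless admits a well-defined limit as $\epsilon\to0$. The conservation law contributes one further eigenvalue $-\mu\le0$ transverse to the invariant hyperplane $\{N=1\}$, neutral only in the pure-waning case $\mu=0$, where stability is assessed within that hyperplane.

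The hard part will be showing that this effective reduced slow Jacobian is Hurwitz. I would apply the Routh–Hurwitz criterion to its characteristic polynomial expanded in $\epsilon$. Two difficulties must be controlled. First, some slow eigenvalues degenerate to $O(\epsilon)$ because they are tied to the small demographic and waning rates $\mu,\delta_k$, so their real parts are decided only at the next order and the Hurwitz determinants must be tracked accordingly. Second, one must exclude a complex-conjugate pair crossing into the right half-plane, which would replace the claimed stable equilibrium by oscillatory coexistence. Verifying that every Hurwitz determinant is positive for all sufficiently small $\epsilon$ — carrying the $\sigma_{21}>0$ and $\sigma_{21}=0$ cases in parallel, since they produce different reduced blocks — yields $\mathrm{Re}\,\lambda<0$ for every eigenvalue and hence the local asymptotic stability of $\phi^*$ for all sufficiently large $\beta_1$.
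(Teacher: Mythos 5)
Your part~1 follows essentially the paper's own route: the same reduction of the steady-state system to three equations in $(S,I_1,I_2)$ (the paper's Lemma~\ref{lem:reduction2Systemof3}, with identical formulas for $R_1^*,R_2^*,J_1^*,J_2^*$ and the same separate treatment of $\sigma_{21}=0$), the same leading-order balance forcing $s_0=0$, $a_2=0$, $s_1=1$ and yielding~\eqref{eq:b2a1}, the same rescaling of $S$ by $\varepsilon$ before invoking the implicit function theorem (the paper sets $S^*=\frac{\gamma_1+\mu}{\beta_1}(1+\tilde S^*)$ and checks $\det{\bf J}<0$ for the limiting $3\times3$, resp.\ $2\times 2$, Jacobian), and the same derivation of positivity and of condition~\eqref{eq:beta2cond} from $b_2>0$. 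For part~2 you take a genuinely different but equivalent route: a fast--slow Schur-complement reduction of the Jacobian, whereas the paper expands the full characteristic polynomial, $\varepsilon^2P(\lambda)=a_1^2\sigma_{21}P_0(\lambda)+\varepsilon P_1(\lambda)+\cdots$, reads the slow spectrum off the factorization $P_0(\lambda)=(\lambda+\gamma_1+\mu)(\lambda+\gamma_2+\mu)Q(\lambda)$, and captures the two escaping roots by dominant balance via $x^2+(\sigma_{21}+1)a_1x+\sigma_{21}a_1^2=(x+\sigma_{21}a_1)(x+a_1)$ --- exactly the two fast eigenvalues your quasi-steady-state picture predicts (one fast root only when $\sigma_{21}=0$, as you anticipate). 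Your insistence that the genuine Schur complement, not the slow principal submatrix, must be used is correct and is the right subtlety to flag, although the $O(1/\varepsilon)$ entries sit in the $S$-\emph{column} (e.g.\ $\partial_S$ of the $I_1$-equation, $\beta_1(I_1^*+\eta_1J_1^*)$), not in the $S$-row, whose entries such as $-\beta_1S^*\approx-(\gamma_1+\mu)$ are $O(1)$; the product structure still gives the $O(1)$ correction you claim.

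Two of your anticipated difficulties do not materialize, and noting this closes most of the distance between your outline and a complete proof. Since $\mu$ and $\delta_k$ are \emph{fixed} parameters satisfying~\eqref{eq:condParms} and the only asymptotic parameter is $\varepsilon=1/\mathcal{R}_1$, the slow eigenvalues converge to the roots of the fixed polynomial $P_0$, which the paper shows is strictly Hurwitz (the cubic $Q$ has $\alpha_j>0$ with $\alpha_2\alpha_1>\alpha_0$, using $b_2>0$); hence no eigenvalue degenerates to $O(\varepsilon)$ and no marginal complex pair requires next-order Hurwitz determinants --- your ``hard part'' reduces to one explicit Routh--Hurwitz check on a cubic, carried out in parallel for $\sigma_{21}=0$. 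Likewise, the transverse conservation-law eigenvalue $-\mu$ never enters because the paper in effect works on the invariant manifold $N=1$ (eliminating $R_3$, so the characteristic polynomial has degree seven). The one step your proposal leaves genuinely unexecuted is this finite verification together with the $O(\varepsilon)$ coefficient extraction~\eqref{eq:s2b1_sigma21_positive}/\eqref{eq:s2b1_sigma21_zero}; both are computations fully consistent with your plan, so the outline would go through as written. (Your uniqueness argument --- every coexistence equilibrium satisfies the reduced system, so the locally unique branch is the only one --- is at the same level of rigor as the paper's own, which likewise infers uniqueness from the forced leading-order balances plus the implicit function theorem.)
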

\begin{proof}
See appendix~\ref{app:TheoremCE}
\end{proof}
Theorem~\ref{thm:phi*stability} considers cases when strain one is highly transmissible, and strain two is transmissible enough to satisfy~\eqref{eq:beta2cond}. In these cases, the theorem proves the existence, uniqueness and stability of an endemic equilibrium in which both strains coexist.   Condition~\eqref{eq:beta2cond} identifies with condition~\eqref{eq:IEE1unstable} in the limit~$\beta_1\to\infty$ and can be read as
\[
\lim_{\beta_1\to\infty}\hat{\mathcal{R}}^1_2>1,
\]
i.e., the invasion number of strain two is bigger than one when strain one is highly transmissible.
We note that condition~\eqref{eq:beta2cond} is stronger than condition~\eqref{eq:IEE1unstable}.  Namely, the coexistence endemic equilibrium~$\phi^*$ does not exist when the single-strain endemic equilibrium~$I_{EE}^1$ is stable.
Furthermore, the theorem provides an approximation to the variable values at the endemic equilibrium, see~\eqref{eq:approxCE}. 
Condition~\eqref{eq:beta2cond}, for example, originates from approximation~\eqref{eq:approxCE}.  Indeed, the requirement~$I_2^*>0$ implies~$b_2>0$, which in turn leads to condition~\eqref{eq:beta2cond}, see~\eqref{eq:series_S} and~\eqref{eq:b2a1}.
Approximation~\eqref{eq:approxCE} also implies that despite the vast competitive advantage of strain one, at the equilibrium of coexistence, the prevalence of both strains is comparable
\[
\frac{I_2^*+J_2^*}{I_1^*+J_1^*}=
\frac{\gamma_1\sigma_{12}b_2}{(\gamma_2+\mu)\sigma_{12}b_2+\delta_1+\mu}+\mathcal{O}\left(\frac1{\mathcal{R}_1}\right),
\]
and the prevalence of strain two may exceed that of strain one.  

\section{Results}
Theorem~\ref{thm:phi*stability} implies that two strains can coexist despite fierce competition between the strains.  In the system of study, a necessary condition for coexistence is that infection by the dominant strain does not provide complete protection from infections by the other strain.  Indeed, in the limit of complete cross-immunity,~$\sigma_{12}=0$, condition~\eqref{eq:beta2cond} cannot be satisfied.  Therefore, Theorem~\ref{thm:phi*stability} implies that the competitive exclusion principle does not unconditionally hold beyond the established case of complete cross-immunity~\cite{bremermann1989competitive}, even if one strain has a vast competitive advantage over the others.  

In what follows we present two numerical examples. In both cases, we fix all parameters except the transmission parameter of strain one and study the system's behavior as strain one becomes more and more transmissible.  We choose the timescales so that they roughly correspond to Influenza or COVID-19, where one unit of time is a week: A typical infectious period of one week ($\gamma_i\approx 1$), characteristic convalescent immunity period of a year ($\delta_i\approx 1/50$)~\cite{zhao2018individual,doi:10.1056/NEJMoa2118946}, and lifespan of 80 years ($\mu\approx 1/4000$).
In both examples, the parameters are chosen so that condition~\eqref{eq:beta2cond} of Theorem~\ref{thm:phi*stability} is satisfied.  Hence, we expect the systems to converge to a multi-strain endemic equilibrium when strain one is sufficiently transmissible.

\begin{figure}[ht!]
\begin{center}
\includegraphics[width=0.8\textwidth]{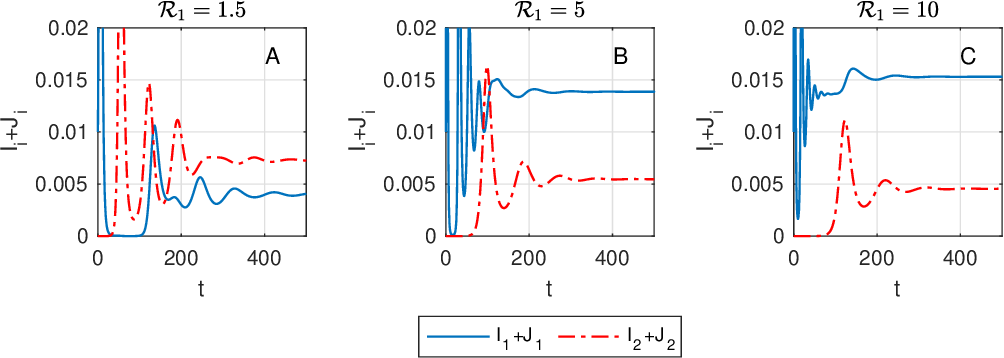}
\caption{Solutions~$I_1(t)$ (solid blue) and~$I_2(t)$ (dash-dotted red) of~\eqref{eq:model} with~$\gamma_1=1.1$, $\gamma_2=0.9$, $\eta_1=0.8$, $\eta_2=1$, $\delta_1=1/60$, $\delta_2=1/50$, $\delta_3=1/30$, $\mu=1/4000$, $\sigma_{21}=0.6$ and~$\sigma_{12}  =0.75$.  The transmission parameter of strain two,~$\beta_2\approx 1.35$, is set so that~$\mathcal{R}_2=1.5$, and the transmission parameter of strain one is set so that A:~$\mathcal{R}_1=1.5$ $(\beta_1\approx 1.65)$, B:~$\mathcal{R}_1=5$ $(\beta_1\approx 5.5)$ and C:~$\mathcal{R}_1=10$ $(\beta_1\approx 11)$.  In all three cases, the initial condition is set as~$I_1(0)=10^{-2}$,~$I_2(0)=10^{-8}$,~$S(0)=1-I_1(0)-I_2(0)$ and zero for all other variables.}
\label{fig:example_balanced_invasion}
\end{center}
\end{figure}We first consider an example in which the strain properties, excluding transmissibility, are comparable with only slight differences in their values.   We chose the parameters so that they correspond to partial cross-immunity~$0<\sigma_{ij}<1$ and neutral or reduced relative infectivity~$0<\eta_i\le1$ in secondary infections.  Since these values do not impose any advantage for secondary infections, one may expect that they would not facilitate coexistence in the case of fierce competition between the strains.  We chose the initial condition~$I_1(0)=10^{-2}$ and~$I_2(0)=10^{-8}$ so that the initial prevalence of strain two is significantly smaller than the initial prevalence of strain one.  This choice sets the initial values far from the equilibrium~$\phi^*$ of coexistence and tests for a possible invasion of strain two.  Figure~\ref{fig:example_balanced_invasion}A presents the prevalence of the strains,~$(I_1+J_1)(t)$ and~$(I_2+J_2)(t)$, in the case both strains have the same basic reproduction number~$\mathcal{R}_1=\mathcal{R}_2=1.5$.  We observe that strain two invades and the system reaches an endemic equilibrium in which both strains coexist with a comparable prevalence of the two strains,~$I_1^*+J_1^*\approx 4\cdot 10^{-3}$ and~$I_2^*+J_2^*\approx 7.2\cdot 10^{-3}$.  Next, we consider the same system when strain one is more transmissible.  As can be expected from Theorem~\ref{thm:phi*stability}, we observe that strain two invades and co-exists with strain one when~$\mathcal{R}_1=5$ and~$\mathcal{R}_1=10$, see Figures~\ref{fig:example_balanced_invasion}B and ~\ref{fig:example_balanced_invasion}C, respectively.  Furthermore, the prevalence of the strains is comparable with~$I_1^*+J_1^*\approx 1.5\cdot 10^{-2}$ and~$I_2^*+J_2^*\approx 0.5\cdot 10^{-2}$.

To demonstrate that partial cross-immunity is necessary for coexistence, as follows from Theorem~\ref{thm:phi*stability}, we solve~\eqref{eq:model} in the case of complete cross-immunity,~$\sigma_{12}=0$, and otherwise with the same parameters as in Figure~\ref{fig:example_balanced_invasion}.  As expected, we observe that strain two is driven to extinction for both moderate and large values of~$\mathcal{R}_1$, see Figure~\ref{fig:example_balanced_invasion_sigma12isZero}.  

Theorem~\ref{thm:phi*stability} proves the local stability of~$\phi^*$ under appropriate conditions, and leaves open the question of whether~$\phi^*$ is globally stable. In Figure~\ref{fig:example_balanced_invasion}, the initial condition is set far from~$\phi^*$.  The less competitive strain successfully invades and the system converges to the coexistence equilibrium~$\phi^*$.  These results, as well as results from additional simulations (data not shown), suggest that the equilibrium is globally stable.  

\begin{figure}[ht!]
\begin{center}
\includegraphics[width=0.8\textwidth]{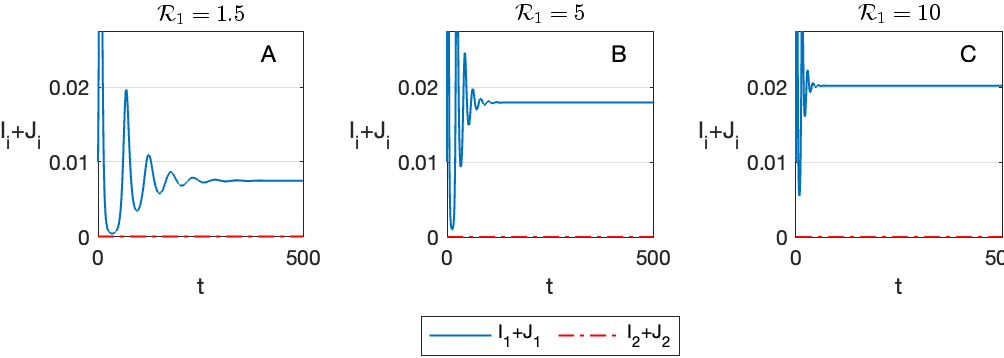}
\caption{Same as Figure~\ref{fig:example_balanced_invasion} but with complete cross-immunity,~$\sigma_{12}=0$.}
\label{fig:example_balanced_invasion_sigma12isZero}
\end{center}
\end{figure}

\begin{figure}[ht!]
\begin{center}
\includegraphics[width=0.8\textwidth]{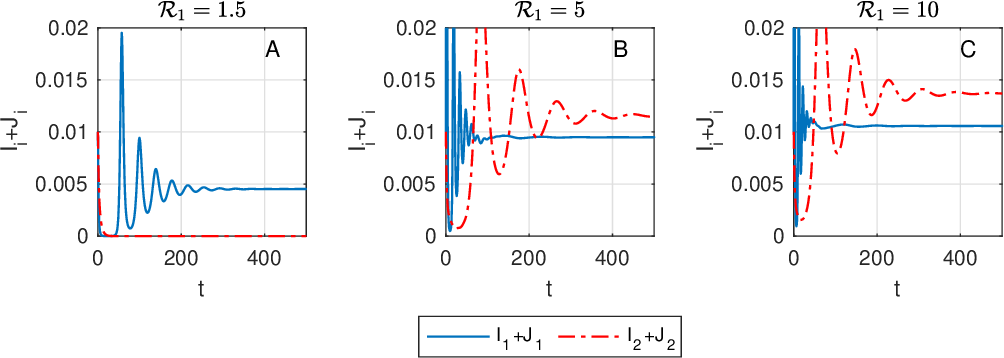}
\caption{Solutions~$I_1(t)$ (solid blue) and~$I_2(t)$ (dash-dotted red) of~\eqref{eq:model} with~$\gamma_1=1.5$, $\gamma_2=0.5$, $\eta_1=1$, $\eta_2=3$, $\delta_1=1/40$, $\delta_2=1/50$, $\delta_3=1/60$, $\mu=1/4000$, $\sigma_{21}=0.75$ and~$\sigma_{12}  =1$.  The transmission parameter of strain two,~$\beta_2\approx 0.3$, is set so that~$\mathcal{R}_2=0.6$, and the transmission parameter of strain one is set so that A:~$\mathcal{R}_1=1.5$ $(\beta_1\approx 2.25)$, B:~$\mathcal{R}_1=5$ $(\beta_1\approx 7.5)$ and C:~$\mathcal{R}_1=10$ $(\beta_1\approx 15)$.  In all three cases, the initial condition is set as~$S(0)=R_1(0)=R_2(0)=0.25$,  $I_1(0)=I_2(0)=10^{-2}$, and zero for all other variables.}
\label{fig:example_StabilizeByLargeR1}
\end{center}
\end{figure}
In the next example, we consider a case in which the basic reproduction number of strain two is smaller than one, yet the relative infectivity in secondary infections with strain two,~$\eta_2=3$, is set relatively high so that condition~\eqref{eq:beta2cond} is satisfied.  In this case, strain two cannot persist alone, yet Theorem~\ref{thm:phi*stability} does imply that it will co-exist with strain one when strain one is sufficiently transmissible.  As expected, for moderate values of~$\mathcal{R}_1$, strain two is driven to extinction, see Figure~\ref{fig:example_StabilizeByLargeR1}A. Yet, for~$\mathcal{R}_1=5$ and~$\mathcal{R}_1=10$, the two strains coexist, see Figures~\ref{fig:example_StabilizeByLargeR1}B and~\ref{fig:example_StabilizeByLargeR1}C, respectively.  Intuitively, strain two benefits from the co-circulation of strain one since the infectivity of those infected with strain two after being infected by strain one is higher than the infectivity of those with primary infections with strain two.  These simulations also provide an example of a case in which the prevalence of the less dominant strain is higher in the long term than the prevalence of the dominant strain, see Figures~\ref{fig:example_StabilizeByLargeR1}B and~\ref{fig:example_StabilizeByLargeR1}C.

A systematic numerical verification of  approximation~\eqref{eq:approxCE} is presented in Appendix~\ref{app:numericalVerification}.

\section{Conclusions}
Our results imply that the competitive exclusion principle does not extend beyond the established case of complete cross-immunity, even when one species has a vast competitive advantage over the other.  Rather, we show that the system can converge to an endemic equilibrium in which both strains coexist, despite fierce competition between the strains.  Particularly, we provide an example of coexistence when the basic reproduction number of the less dominant strain is less than one, and therefore this strain cannot persist without an interaction with the dominant strain.  Moreover, we show that, in equilibrium, the prevalence of the less dominant strain is comparable in magnitude to that of the dominant strain, and even exceeds it in certain cases.

To understand why a vast competitive advantage of strain one does not drive strain two to extinction, it is useful to consider an epidemic outbreak in which both strains are introduced to a susceptible population.  Due to the high basic reproduction number of strain one, in the epidemic's early stages, infections by strain one are dominant and widespread.
While those recovered from strain one remain immune to reinfections by strain one for some time, in the absence of complete cross-immunity, they remain susceptible, to some degree, to infections by strain two.  Strain two can potentially thrive on this susceptible pool without suffering from competition with strain one.  More concretely, strain two will be able to spread if its effective reproduction number~$\mathcal{R}_2^{\rm eff}$ in a population that has all been infected with strain one is larger than one
\[
\mathcal{R}_2^{\rm eff}=\sigma_{12}\mathcal{R}_2\eta_2>1.
\]
Here,~$\mathcal{R}_2^{\rm eff}$ equals the basic reproduction number~$\mathcal{R}_2$ of strain two (for primary infections) while accounting for
the relative changes in susceptibility,~$\sigma_{12}$, and infectivity,~$\eta_2$ of those recovered from strain one.  
This condition is closely related to condition~\eqref{eq:beta2cond} which also takes the effect of waning immunity,~$\delta_1$, mortality,~$\mu$, and the infectious period,~$\gamma_1$, on the size of the population immune to infections by strain one, and can be written as,
\[
\frac{\gamma_1}{\gamma_1+\delta_1+\mu}\mathcal{R}_2^{\rm eff}>1.
\]
Following the explanation above, one may also view strain two as an invading strain after strain one widely spread in the population and reached equilibrium.  Under this view, we note that condition~\eqref{eq:beta2cond} can also be read as the asymptotic invasion number~$\hat{\mathcal{R}^1_2}$ of strain two at the limit of large transmissibility of strain one, see~\eqref{eq:IEE1unstable}.  

The analysis in this work combines two approaches.  First, we follow a common approach and focus on the behavior of single-strain equilibriums while taking advantage of the fact that it is easier to compute the single-strain equilibrium values and determine their local stability, see Theorem~\ref{thm:EE1}.  Yet, the instability of the single-strain and disease-free equilibrium points does not imply the persistence of multiple strains.  Proving persistence may be demanding in complex epidemic systems with interacting strains~\cite{margheri2003some}, and attaining such a result does not reveal the nature of coexistence in the system, e.g., whether the system converges to an endemic equilibrium of coexistence or oscillates.  Therefore, we complement our analysis by studying the possible equilibrium(s) of coexistence.  Since this analysis involves problems that do not have an analytic solution or have a solution that is cumbersome to work with, we use asymptotic methods that enable us to tackle these problems at the accuracy required to attain the solution, e.g., determine the stability of the coexistence state.  Using this approach we show the existence and stability of an endemic equilibrium in which both strains coexist. Therefore, the analysis not only shows coexistence but also characterizes its nature.

Our results show the local stability of an endemic equilibrium of coexistence.  Numerical simulations show that the system converges to the above endemic equilibrium even when the initial conditions are far from equilibrium, suggesting that the equilibrium is globally stable.  Proving global stability is left open as an interesting mathematical question for future research.

One implication of our results is that in the absence of complete cross-immunity, less transmissible, yet more dangerous, strains can invade and coexist with highly transmissible strains.  This suggests that the evolution of pathogens and the emergence of new strains may not maximize~$\mathcal{R}_0$ with time, but rather other quantities.  Such a phenomenon was demonstrated in an epidemic system with superinfection~\cite {nowak1994superinfection}. The study of evolution in a multi-strain system with partial cross-immunity will be presented elsewhere.

Our approach challenged the validity of the exclusion principle in an ultimate limit where, intuitively, it should be most valid.  From a mathematical point of view, studying such a limit enables using approximation methods that allow analysis of the system despite its complexity.  Indeed, many details become negligible at the limit, making the analysis tractable.  Concurrently, other quantities arise naturally at the limit.  Thus, from a biological point of view, such an approach provides valuable insights as it highlights quantities of interest while naturally sorting out details.  The unexpected results revealed using this approach refine our understanding of the exclusion principle and multi-strain dynamics.  

This study was motivated by our experience during the SARS-CoV-2 pandemic.  Indeed, the emergence of the Omicron variant created an unprecedented scenario of an epidemic driven by a highly transmissible variant. Our data-driven research during that time~\cite{gavish2022population} showed that a variant with a basic reproduction number as high as ten can defy conventional theory in certain circumstances.  One example is counter-intuitive optimal vaccine allocations at elevated reproduction numbers~\cite{gavish2022optimal}.  The results presented in this work provide an additional example to an unexpected result that arises when considering highly transmissible strains.  

The counter-intuitive results presented in this study touch upon a fundamental principle in epidemiology and shed light on the mechanisms that enable competing strains to coexist.  While we have focused on the competitive exclusion principle in epidemiology, it is plausible that similar phenomena will also occur in other ecological systems.  For example, microbial populations that evolve near plant roots, and in particular accumulation of host-specific pathogens, may generate negative feedback on the plant and facilitate diversity or coexistence of plant species~\cite{bever2012microbial}, plausibly, even if one plant has a vast competitive advantage over the other.  Extensions of our study beyond epidemiological systems will be presented elsewhere.
\subsection*{Acknowledgements}

This research was supported by the Israel Science Foundation (grant no. 3730/20 to N.G.) within the KillCorona-Curbing Coronavirus Research Program, and by the Israeli Science Foundation (ISF) grant 1596/23.

We thank Guy Katriel for the most helpful comments and discussions on this work.
\appendix

\section{Computation of~$\mathcal{R}_0$}\label{app:R0}

Following a standard computation of the next generation matrix~\cite{martcheva2015introduction,diekmann1990definition,van2002reproduction}
we split the right-hand side in the equations for the infected compartments~$(I_1,I_2,J_1,J_2)$ of~\eqref{eq:model} in the following way:
\begin{equation}\label{eq:NGM}
\begin{split}
\frac{\text{d}I_1}{\text{d}t}  &= \underbrace{\beta_1S(I_1+\eta_{1}J_1)}_{\mathcal{F}_{1}}-\underbrace{(\mu +\gamma_1)I_1}_{\mathcal{V}_1},\\
         \frac{\text{d}I_2}{\text{d}t}  &= \underbrace{\beta_2S(I_2+\eta_{2}J_2)}_{\mathcal{F}_2}-\underbrace{(\mu +\gamma_2)I_2}_{\mathcal{V}_2},\\
         \frac{\text{d}J_1}{\text{d}t}  &= \underbrace{\beta_1\sigma_{21}R_2(I_1+\eta_{1}J_1)}_{\mathcal{F}_3}-\underbrace{(\mu +\gamma_1)J_1}_{\mathcal{V}_3},\\[1ex]
         \frac{\text{d}J_2}{\text{d}t}  &= \underbrace{\beta_2\sigma_{12}R_1(I_2+\eta_{2}J_2)}_{\mathcal{F}_4}-\underbrace{(\mu +\gamma_2)J_2}_{\mathcal{V}_4},
        \end{split}
\end{equation}
        where~$\mathcal{F}_i$ is the appearance rate of new infections in the relevant compartments, and~$\mathcal{V}_i$ incorporates the remaining transitional terms.  Linearization of the system of infected compartments,~\eqref{eq:NGM}, about the disease-free equilibrium in which all the population is susceptible,~$S=1$, yields
\[
x^\prime(t)=(F-V)x,\quad x=(I_1,I_2,J_1,J_2)^T,
\]
where the matrices~$F$ and~$V$ originated from the terms~$\mathcal{F}_i$ and~$\mathcal{V}_i$, respectively,
\[
F=\left[\begin{array}{cccc}
\beta_1&&\beta_1\eta_1&\\
&\beta_2&&\beta_2\eta_2\\
0&0&0&0\\
0&0&0&0
\end{array}
\right],\quad V=\left[\begin{array}{cccc}
\gamma_1+\mu&&&\\
&\gamma_2+\mu&&\\
&&\gamma_1+\mu&\\
&&&\gamma_2+\mu
\end{array}
\right].
\]
The basic reproduction number is the maximal eigenvalue of the next-generation matrix
\[
\mathcal{R}_0=\rho(FV^{-1})=\max\{\mathcal{R}_1,\mathcal{R}_2\},\quad \mathcal{R}_i=\frac{\beta_i}{\gamma_i+\mu}.
\]
\section{Proof of Theorem 2.1}\label{app:proofLemma21}
We first prove part (i) of Theorem~\ref{thm:EE1} by seeking a steady-state solution of~\eqref{eq:model} with~$I_1>0$ and~$I_2=0$.  Substituting~$I_2=0$ in~$J_1^\prime(t)+R_2^\prime(t)=0$, see~\eqref{eq:eqJ1} and~\eqref{eq:eqR2}, implies that~$R_2=0$.  Thus,~$J_1^\prime(t)=0$ implies~$J_1=0$, see~\eqref{eq:eqJ1}. Finally,~$I_2^\prime=0$ implies~$S=0$ or~$J_2=0$, see~\eqref{eq:eqI2}. In the former case, however,~\eqref{eq:eqI1} implies that~$I_1=0$ at steady-state.  Thus, in the case of interest,~$I_1>0$, it follows that~$S\ne0$ and~$J_2=0$.  Overall, the above computation shows that the model~\eqref{eq:model} has a unique single-strain endemic equilibrium point~$\phi^{EE,1}$ with~$I_1>0$ and~$I_2=0$ that satisfies
\[
I_2^{EE,1}=J_1^{EE,1}=J_2^{EE,1}=R_2^{EE,1}=0.
\]
Substituting the above in~\eqref{eq:model} and solving the resulting algebraic system yields that the other variables of~$\phi^{EE,1}$ uniquely satisfy
\[
S^{EE,1}=\frac1{\mathcal{R}_1},\quad I_1^{EE,1}=\frac{\delta_1+\mu }{1+\delta_1+\mu }\frac{\mathcal{R}_1-1}{\mathcal{R}_1},\quad R_1^{EE,1}=\frac{1}{\delta_1+\mu }I_1^{EE,1}.
\]
Note that the condition~$I_1^{EE,1}>0$ implies~$\mathcal{R}_1>1$.  This condition is satisfied for sufficiently large~$\beta_1>\gamma_1$.
Similarly, when~$\mathcal{R}_2>1$, model~\eqref{eq:model} has a unique single-strain endemic steady-state~$\phi^{EE,2}$ with~$I_2>0$ and~$I_1=0$ that satisfies
\[
\begin{split}
&S^{EE,2}=\frac1{\mathcal{R}_2},\quad I_2^{EE,2}=\frac{\delta_2+\mu }{\gamma_2+\delta_2+\mu }\frac{\mathcal{R}_2-1}{\mathcal{R}_2},\quad R^{EE,2}_2=\frac{\gamma_2}{\delta_2+\mu }I^{EE,2}_2,\\&I^{EE,2}_1=J^{EE,2}_1=J_2^{EE,2}=R_1^{EE,2}=0.
\end{split}
\]

To study the stability of~$\phi_{EE}^1$, we compute the invasion number~$\hat{\mathcal{R}}^1_2$ of strain two at the equilibrium~$\phi_{EE}^1$
of strain one.  To do so, we compute the maximal eigenvalue of the next-generation matrix of strain two: Similar to the computation in~\ref{app:R0}, we linearize the relevant parts in the system of infected compartments,~\eqref{eq:NGM}, about~$\phi^{EE,1}$ to obtain the next-generation matrix and then compute its' maximal eigenvalue.  The resulting invasion number is given by~\eqref{eq:IEE1unstable}, see Theorem~\ref{thm:EE1}(ii).

Similarly, the invasion number~$\hat{\mathcal{R}}^2_1$ of strain one at the equilibrium~$\phi_{EE}^2$
of strain two is given by
\[
\hat{\mathcal{R}}^2_1:=\left[1+\frac{\sigma_{21}\gamma_2\eta_1}{\gamma_2+\delta_2+\mu }(\mathcal{R}_2-1)\right]\frac{\mathcal{R}_1}{\mathcal{R}_2}.
\]
For sufficiently large~$\mathcal{R}_1$,~$\phi_{EE}^2$ is unstable since~$\hat{\mathcal{R}}^2_1>1$.  Thus proving part iii.

\section{Proof of Theorem 2.2}\label{app:TheoremCE}
The following Lemma reduces the computation of the equilibrium point~$\phi^*$ from a solution to a nonlinear system of seven equations, determined by the steady-state of~\eqref{eq:model}, to a solution of a nonlinear system of three equations for three unknown~$(S^*, I_1^*, I_2^*)$:
\begin{lemma}\label{lem:reduction2Systemof3}
Let~$\sigma_{12}>0$, and let~$\phi^*=(S^*,I_1^*,I_2^*,J_1^*,J_2^*,R_1^*,R_2^*)$ be a steady-state solution of the system~\eqref{eq:model} with~$I_1^*>0$ and~$I_2^*>0$.  Then,~$\{S^*,I_1^*,I_2^*\}$ satisfy 
\begin{subequations}\label{eq:eq4CE} 
\begin{align}
        &\mu -\sum_{i=1}^{2}\beta_i[I_i^*+\eta_iJ_i^*(S^*,I_1^*,I_2^*)]S^*-\mu  S^*+\sum_{k=1}^{3}\delta_k  R_k^*(S^*,I_1^*,I_2^*)=0,\label{eq:eq4Sreduced}\\[1ex]
        &\beta_1S^*[I_1^*+\eta_{1}J_1^*(S^*,I_2^*)]-(\mu +\gamma_1)I_1^*=0,\label{eq:eq4I1reduced}\\[1ex]
         & \beta_2S^*[I_2^*+\eta_{2}J_2^*(S^*,I_1^*)]-(\mu +\gamma_2)I_2^*=0,\label{eq:eq4I2reduced}
        \end{align}
        where
                \begin{equation}\label{eq:Ji_func_S_Ii}
	J_1^*(S^*,I_2^*) =
 \begin{cases}\frac{\gamma_2}{\gamma_1+\mu } I_2^* -\frac{\mu +\delta_2}{\gamma_1+\mu }R_2^*(S^*)&\sigma_{21}>0\\
 0&\sigma_{21}=0
 \end{cases},
 \quad J_2^*(S^*,I_1^*) = \frac{\gamma_1}{\gamma_2+\mu }I_1^* -\frac{\mu +\delta_1}{\gamma_2+\mu }R_1^*(S^*).
	\end{equation}
 and
        \begin{equation}\label{eq:Ri_func_S}
R^*_1(S^*)=\frac{1-\mathcal{R}_2S^*}{\eta_2\mathcal{R}_2\sigma_{12}},\qquad R_2^*(S^*) =\begin{cases}\frac{1-\mathcal{R}_1S^*}{\eta_1\mathcal{R}_1\sigma_{21}} ,&\sigma_{21}>0\\
\frac{\gamma_2}{\delta_2+\mu}I_2^*,&\sigma_{21}=0,
\end{cases}.
\end{equation}
\end{subequations}
\end{lemma}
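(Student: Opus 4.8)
The plan is to use five of the steady-state equations of~\eqref{eq:model} to solve for the four ``secondary'' variables $R_1^*, R_2^*, J_1^*, J_2^*$ as explicit functions of $(S^*, I_1^*, I_2^*)$, so that the remaining three --- the steady-state versions of~\eqref{eq:S},~\eqref{eq:eqI1} and~\eqref{eq:eqI2} --- close into the system~\eqref{eq:eq4CE}. As a preliminary I would record that $S^*>0$: setting the right-hand side of~\eqref{eq:eqI1} to zero gives $\beta_1 S^*(I_1^*+\eta_1 J_1^*)=(\mu+\gamma_1)I_1^*>0$, which is impossible if $S^*=0$. This nonvanishing, together with $I_1^*>0$ and $I_2^*>0$, is what licenses the divisions below.

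First I would extract $R_1^*$. Setting the right-hand sides of~\eqref{eq:eqI2} and~\eqref{eq:eqJ2} to zero and dividing the second relation by the first eliminates the common factor $\beta_2(I_2^*+\eta_2 J_2^*)$ and yields $J_2^*/I_2^* = \sigma_{12}R_1^*/S^*$, where dividing by $I_2^*$ uses $I_2^*>0$. Substituting $J_2^*=(\sigma_{12}R_1^*/S^*)I_2^*$ back into~\eqref{eq:eqI2} and cancelling $I_2^*$ gives $\beta_2(S^*+\eta_2\sigma_{12}R_1^*)=\mu+\gamma_2$, i.e.\ $\mathcal{R}_2 S^* + \eta_2\mathcal{R}_2\sigma_{12}R_1^*=1$; solving for $R_1^*$ produces the formula for $R_1^*(S^*)$ in~\eqref{eq:Ri_func_S}, and here the hypothesis $\sigma_{12}>0$ is exactly what makes the last step legitimate. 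The identical argument applied to~\eqref{eq:eqI1} and~\eqref{eq:eqJ1} gives $R_2^*=(1-\mathcal{R}_1 S^*)/(\eta_1\mathcal{R}_1\sigma_{21})$ whenever $\sigma_{21}>0$.

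Next I would obtain the $J_i^*$ by \emph{adding} steady-state equations rather than dividing them. Adding~\eqref{eq:eqR1} and~\eqref{eq:eqJ2} cancels the nonlinear transmission term and leaves $\gamma_1 I_1^* - (\mu+\delta_1)R_1^* - (\mu+\gamma_2)J_2^*=0$, which rearranges to the stated expression for $J_2^*(S^*,I_1^*)$ once $R_1^*=R_1^*(S^*)$ is inserted; symmetrically, adding~\eqref{eq:eqR2} and~\eqref{eq:eqJ1} yields $J_1^*(S^*,I_2^*)$ for $\sigma_{21}>0$. With all four secondary variables now functions of $(S^*,I_1^*,I_2^*)$ --- and $R_3^*$ slaved through $R_3^*=(\gamma_1 J_1^*+\gamma_2 J_2^*)/(\mu+\delta_3)$, which enters only the $\delta_3 R_3^*$ term of~\eqref{eq:eq4Sreduced} --- substitution into the steady-state forms of~\eqref{eq:S},~\eqref{eq:eqI1} and~\eqref{eq:eqI2} reproduces~\eqref{eq:eq4CE} verbatim.

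The main obstacle, and the only point requiring genuine care, is the degenerate case $\sigma_{21}=0$, where the division argument for $R_2^*$ collapses. Here I would instead read $J_1^*=0$ directly from~\eqref{eq:eqJ1}, whose production term vanishes identically, and then obtain $R_2^*=\frac{\gamma_2}{\delta_2+\mu}I_2^*$ directly from~\eqref{eq:eqR2}, which no longer contains a transmission term; these are precisely the two branches recorded in~\eqref{eq:Ji_func_S_Ii} and~\eqref{eq:Ri_func_S}. Beyond this case split the argument is pure linear bookkeeping, so no further difficulty is anticipated; it remains only to confirm that each cancellation is licensed by $S^*>0$, $I_1^*>0$, $I_2^*>0$ and $\sigma_{12}>0$, all of which hold by hypothesis.
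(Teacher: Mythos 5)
Your proposal is correct and takes essentially the same route as the paper's proof: you eliminate the bilinear transmission terms exactly as the paper does, first by combining the steady-state $I_i$ and $J_i$ equations (where $\sigma_{12}>0$, resp.\ $\sigma_{21}>0$, licenses solving for $R_1^*$, resp.\ $R_2^*$, as a linear function of $S^*$), then by summing the $J_i$ and $R_j$ equations --- the paper's observation that $J_1^\prime+R_2^\prime=0=J_2^\prime+R_1^\prime$ --- to obtain~\eqref{eq:Ji_func_S_Ii}, and finally by reading $J_1^*=0$ and $(\delta_2+\mu)R_2^*=\gamma_2 I_2^*$ directly from~\eqref{eq:eqJ1} and~\eqref{eq:eqR2} in the degenerate case $\sigma_{21}=0$. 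The only differences are cosmetic and favor you: you make explicit the positivity facts ($S^*>0$, $I_2^*>0$) that license each cancellation, and the slaving of $R_3^*$ via $R_3^*=(\gamma_1 J_1^*+\gamma_2 J_2^*)/(\mu+\delta_3)$, both of which the paper leaves implicit.
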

\begin{proof}      
When~$I_1^*>0$ and~$I_2^*>0$, Equations~\eqref{eq:eqI1} and~\eqref{eq:eqI2} imply that the steady-state solution satisfies
\[
\frac{\eta_1J_1^*}{I_1^*+\eta_1J_1^*}=1-\mathcal{R}_1S^* ,\quad \frac{\eta_2J^*_2}{I^*_2+\eta_2J^*_2}=1-\mathcal{R}_2S^*.
\]
When~$\sigma_{21}>0$, substituting the above expressions in~\eqref{eq:eqJ1} and~\eqref{eq:eqJ2} defines~$R^*_i$ as a linear function of~$S^*$
\[
\eta_1\mathcal{R}_1\sigma_{21}R_2^*(S^*) =\frac{\eta_1J_1^* }{I_1^* +\eta_1 J^*_1}=1-\mathcal{R}_1S^* ,\quad \eta_2\mathcal{R}_2\sigma_{12}R^*_1(S^*)=\frac{\eta_2J_2^* }{I_2^* +\eta_2 J_2^* }=1-\mathcal{R}_2S^*,
\]
yielding~\eqref{eq:Ri_func_S} for~$\sigma_{21}>0$.
Furthermore, the steady-state solution satisfies~$J_i^\prime(t)=R_i^\prime(t)=0$ for~$i=1,2$, hence~$J_1^\prime(t)+R_2^\prime(t)=0=J_2^\prime(t)+R_1^\prime(t)$, implying~\eqref{eq:Ji_func_S_Ii}.
Substituting~\eqref{eq:Ri_func_S} and~\eqref{eq:Ji_func_S_Ii} in~(\ref{eq:S}-\ref{eq:eqI2}), and setting~$S^\prime(t)=I_1^\prime(t)=I_2^\prime(t)=0$ gives rise to~(\ref{eq:eq4Sreduced}-\ref{eq:eq4I2reduced}).

When~$\sigma_{21}=0$, the above computation remains valid for~$J_2^*$ and~$R_1^*$, while equations~\eqref{eq:eqJ1} and~\eqref{eq:eqR2} directly imply~$J_1^*=0$ and~$(\delta_2+\mu)R_2^*=\gamma_2 I_2^*$.  Therefore, giving rise to~\eqref{eq:Ji_func_S_Ii} and~\eqref{eq:Ri_func_S} for~$\sigma_{21}=0$.
\end{proof}

We now rely on Lemma~\ref{lem:reduction2Systemof3} to prove the first part of Theorem~\ref{thm:phi*stability}.  For brevity, we consider separately the cases~$\sigma_{21}>0$ and~$\sigma_{21}=0$:
\begin{prop}[Existence, uniqueness and approximation of~$\phi^*$ for~$\sigma_{21}>0$]\label{prop:approxCEsigma21positive}
Let~$\sigma_{21}>0$.  Under the conditions of Theorem~\ref{thm:phi*stability}, if~$\sigma_{12}>0$, then there exists a unique steady-state solution $$\phi^*=(S^*,I_1^*,I_2^*,J_1^*,J_2^*,R_1^*,R_2^*)$$ of the system~\eqref{eq:model} with~$I_1^*>0$ and~$I_2^*>0$.  This solution satisfies~\eqref{eq:approxCE}.
\end{prop}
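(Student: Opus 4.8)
The plan is to treat $\epsilon:=1/\mathcal{R}_1=(\gamma_1+\mu)/\beta_1$ as a small parameter and to solve the reduced steady-state system of Lemma~\ref{lem:reduction2Systemof3} by a regular-perturbation argument anchored on the implicit function theorem. First I would substitute the closed forms \eqref{eq:Ji_func_S_Ii}--\eqref{eq:Ri_func_S}, together with $R_3^*=(\gamma_1J_1^*+\gamma_2J_2^*)/(\mu+\delta_3)$ obtained from $R_3'=0$, into \eqref{eq:eq4Sreduced}--\eqref{eq:eq4I2reduced}. It is convenient to trade the $S$-equation \eqref{eq:eq4Sreduced} for the conservation constraint $N:=S^*+I_1^*+I_2^*+J_1^*+J_2^*+R_1^*+R_2^*+R_3^*=1$: summing \eqref{eq:model} gives $N'=\mu(1-N)$, so at steady state $N=1$ whenever $\mu>0$, in which case the constraint is equivalent to \eqref{eq:eq4Sreduced} modulo the remaining equations, while if $\mu=0$ the population is merely conserved and $N=1$ is the correct non-redundant normalization. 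Guided by the target orders in \eqref{eq:series_S}, I would then rescale $S^*=\epsilon x_3$, $I_2^*=\epsilon x_2$, $I_1^*=x_1$ and clear powers of $\epsilon$, producing a map $G(x_1,x_2,x_3;\epsilon)=0$ which, because $\sigma_{12}>0$, $\sigma_{21}>0$ and the remaining denominators are products of strictly positive parameter combinations, is analytic in all four arguments near $\epsilon=0$. The rescaling is exactly what converts the singular limit $S^*\to0$ into a regular one.

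At $\epsilon=0$ the system is triangular and explicitly solvable. The $I_1$-equation \eqref{eq:eq4I1reduced} collapses to $(x_3-1)x_1=0$, forcing $x_3=1$ since $x_1=I_1^*>0$ (this is $S^*=1/\mathcal{R}_1$ at leading order); the constraint $N=1$ is then linear in $x_1$ and fixes $x_1=a_1$; and the $I_2$-equation \eqref{eq:eq4I2reduced} fixes $x_2=b_2$. Carrying out these linear solves reproduces \eqref{eq:b2a1}. Positivity enters precisely here: the prefactor of $b_2$ in \eqref{eq:b2a1} is positive, so $b_2>0$ is equivalent to the bracket being positive, i.e.\ to condition \eqref{eq:beta2cond}; $b_2>0$ then forces $a_1>0$, and through that same prefactor it requires $\mu+\delta_3>0$, the natural non-degeneracy under which $h$ and $R_3^*$ below are well defined. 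Thus the leading root $(x_1,x_2,x_3)=(a_1,b_2,1)$ is admissible exactly under the stated hypotheses $\sigma_{12}>0$ and \eqref{eq:beta2cond}.

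Next I would apply the implicit function theorem at $(a_1,b_2,1;0)$. A direct computation gives the Jacobian $\partial G/\partial(x_1,x_2,x_3)$ there with determinant equal, up to sign, to $a_1\,h$, where $h=\partial_{x_1}(N-1)=1+\frac{\gamma_1}{\gamma_2+\mu}+\frac{\gamma_1\gamma_2}{(\gamma_2+\mu)(\mu+\delta_3)}>0$; since $a_1>0$ this is nonzero for every $\mu\ge0$, which is the payoff of having replaced \eqref{eq:eq4Sreduced} by $N=1$. The theorem then yields a unique analytic branch $(x_1(\epsilon),x_2(\epsilon),x_3(\epsilon))$ near the leading root for all sufficiently small $\epsilon$, i.e.\ for all sufficiently large $\beta_1>\gamma_1$. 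Its Taylor coefficients follow by matching powers of $\epsilon$ in $G=0$: the $\mathcal{O}(\epsilon)$ correction, computed with the same nonsingular Jacobian, yields $s_2$ and $b_1$ of \eqref{eq:s2b1_sigma21_positive}, so that undoing the rescaling gives \eqref{eq:series_S} and reinserting into \eqref{eq:Ji_func_S_Ii}--\eqref{eq:Ri_func_S} recovers \eqref{eq:J2R1_func_S_Ii} and \eqref{eq:J1R2_func_S_Ii_sigma21_positive}. Continuity from the strictly positive leading values gives $I_1^*,I_2^*>0$ for small $\epsilon$, establishing existence and the approximation \eqref{eq:approxCE}.

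The step I expect to demand the most care is \emph{global} uniqueness, because the implicit function theorem supplies only a local branch. To obtain uniqueness among all coexistence equilibria with $I_1^*,I_2^*>0$, I would derive a priori bounds confining any such equilibrium to a fixed compact region independent of $\epsilon$: equation \eqref{eq:eqI1} gives $0<\mathcal{R}_1S^*\le1$, hence $x_3\in(0,1]$; the normalization $N=1$ gives $0<I_1^*\le1$, hence $x_1\in(0,1]$; and inserting $R_1^*\ge0$ and $S^*\le1/\mathcal{R}_1$ into \eqref{eq:eqI2} bounds $x_2=\mathcal{R}_1I_2^*$ by a constant. Inside this box the leading-order system has the single nondegenerate root $(a_1,b_2,1)$ found above, so a compactness-and-contradiction argument---any second solution taken along a sequence $\epsilon_n\to0$ must converge to this root and then collide with the implicit-function branch---yields uniqueness for all sufficiently small $\epsilon$. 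I regard this confinement, rather than the leading-order solve or the order-by-order matching, as the genuine obstacle; the latter steps are routine once the scaling $S^*,I_2^*=\mathcal{O}(1/\mathcal{R}_1)$, $I_1^*=\mathcal{O}(1)$ is identified and the nonsingularity of the Jacobian is secured.
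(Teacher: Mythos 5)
Your proposal is correct and, at its core, follows the same route as the paper: reduce via Lemma~\ref{lem:reduction2Systemof3}, expand in $\varepsilon=1/\mathcal{R}_1$, solve the leading order to obtain $(a_1,b_2,1)$ and extract condition~\eqref{eq:beta2cond} from $b_2>0$, then secure the branch by the implicit function theorem at a nonsingular Jacobian; the paper does exactly this, rescaling only $S^*=\frac{\gamma_1+\mu}{\beta_1}(1+\tilde S^*)$ and keeping~\eqref{eq:eq4Sreduced} as its third equation. Your two departures are refinements rather than detours, and both buy something real. First, trading~\eqref{eq:eq4Sreduced} for the conservation constraint $N=1$ is legitimate (summing~\eqref{eq:model} indeed gives $N'=\mu(1-N)$, and under the substitutions of Lemma~\ref{lem:reduction2Systemof3} the three reduced residuals satisfy $F_S+F_{I_1}+F_{I_2}\equiv\mu(1-N)$), and it is more robust at the edge of the parameter set: when $\mu=0$ this identity makes the paper's three-equation system rank-deficient at every fixed $\varepsilon$, so the equilibrium is determined only up to the conserved total population, and your explicit normalization is what restores well-posedness there; your related observation that $b_2\propto(\delta_3+\mu)$, so that $\mu+\delta_3>0$ is silently needed on top of~\eqref{eq:condParms}, flags a genuine implicit hypothesis of the paper. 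Second, you are right that the implicit function theorem only delivers a \emph{locally} unique branch while the proposition asserts uniqueness among all coexistence equilibria; the paper elides this step, whereas your a priori bounds ($\mathcal{R}_1S^*\le1$ from~\eqref{eq:eqI1}, $I_1^*\le1$ from $N=1$, $I_2^*=\mathcal{O}(\varepsilon)$ from~\eqref{eq:eqI2}) plus the compactness-and-contradiction argument supply the missing confinement. One detail you should add there: limits of coexistence equilibria live in the \emph{closed} box, so you must rule out leading-order roots on its boundary; for $\mu+\delta_1>0$ the constraint $J_2^*\ge0$ together with $R_1^*\to 1/(\eta_2\sigma_{12}\mathcal{R}_2)>0$ bounds $x_1$ away from zero and then $(x_3-1)x_1=0$ forces $x_3=1$ and $x_2=b_2$, but this should be spelled out, and the degenerate case $\mu=\delta_1=0$ needs a separate line. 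With that sentence added, your proof is complete and in fact somewhat more careful than the paper's own.
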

\begin{proof}
By Lemma~\ref{lem:reduction2Systemof3},~$\phi^*$ satisfies~\eqref{eq:eq4CE}.
Denote~$\varepsilon=1/\mathcal{R}_1$ and consider an expansion of the form
\begin{equation}\label{eq:series_S_I}
S^*=s_0+\varepsilon s_1+\varepsilon^2 s_2+\mathcal{O}(\varepsilon^3),\quad 
I^*_i=a_{i}+\varepsilon b_{i}+\mathcal{O}(\varepsilon^2).
\end{equation}
Substituting~\eqref{eq:series_S_I} into~\eqref{eq:eq4Sreduced} and equating the~$\mathcal{O}(1/\varepsilon)$ terms in the arising expansion yields 
\[
 [((\gamma_1 + \mu )a_1 + \eta_1\gamma_2a_2)\sigma_{21} + (\mu  + \delta_2)s_0]s_0=0,
\]
Since~$a_1,a_2,s_0\ge0$, and due to~\eqref{eq:condParms}, it follows that~$s_0=0$.
Further substituting~\eqref{eq:series_S_I} with~$s_0=0$  in~\eqref{eq:eq4I2reduced}, yields 
\[
(\gamma_2+\mu)a_2=\mathcal{O}(\varepsilon),
\]
hence~$a_2=0$.  Similarly, substituting~\eqref{eq:series_S_I}  in~\eqref{eq:eq4I1reduced} and in~\eqref{eq:eq4I2reduced} yields
\[
(\gamma_1+\mu)(s_1-1)a_1=\mathcal{O}(\varepsilon), \quad \left[a_1\sigma_{12}\mathcal{R}_2\eta_2\gamma_1 -(\delta_1+\mu)\right]s_1-(\gamma_2+\mu)\sigma_{12}b_2=\mathcal{O}(\varepsilon),
\]
respectively.  Since~$s_1,a_1,b_2\ge0$, the only feasible solution of the above system is
\[
s_1=1,\quad a_1=\frac{(\gamma_2+\mu)\sigma_{12}b_2+(\delta_1+\mu)}{\mathcal{R}_2\eta_2\gamma_1}.
\]
At the next order, the expansion of~\eqref{eq:eq4I1reduced} yields
\[
s_2=-\frac{\gamma_2\eta_1 b_2}{(\gamma_1+\mu)a_1}.
\]
Finally, substituting~\eqref{eq:series_S_I} into~\eqref{eq:eq4Sreduced} yields~\eqref{eq:s2b1_sigma21_positive}.

Requiring~$I_2^*>0$ implies~$b_2>0$, giving rise to condition~\eqref{eq:beta2cond},
see~\eqref{eq:b2a1}.  This condition also ensures that, for sufficiently large~$\beta_1>\gamma_1$,~$I_1^*>0$ since~$a_1>0$ when~$b_2>0$, see~\eqref{eq:b2a1}.

The above computation uniquely determined the values of the coefficients in~\eqref{eq:approxCE}.  We now use the implicit function theorem to show that there exists a unique solution to~\eqref{eq:eq4Sreduced}, which can be expressed as an expansion of the form~\eqref{eq:series_S_I} up to arbitrary order:  
Let~${\bf F}(\tilde S^*(\varepsilon),I_1^*(\varepsilon),I_2^*(\varepsilon);\varepsilon)$ be the algebraic system~\eqref{eq:eq4CE} where~$S^*=\frac{\gamma_1+\mu}{\beta_1}(1+\tilde S^*)$.
Then, as follows from the analysis above, ${\bf F}(\tilde S^*,I_1^*,I_2^*;\varepsilon=0)$ has a unique solution~$\tilde S^*(0)=0$, $I_1^*(0)=a_1$, $I_2^*(0)=0$.  The Jacobian matrix~${\bf J}$ of~\eqref{eq:eq4Sreduced} at~$\varepsilon=0$ equals
\[
\left.{\bf J}\right|_{(\tilde S^*(0),I_1^*(0),I_2^*(0))}=\left[\begin{array}{ccc}
-a_1(\gamma_1+\mu)&-(\gamma_1+\mu)-\delta_3\frac{\gamma_1+\gamma_2+\mu}{\gamma_2+\mu}&-\eta_1\gamma_2-\delta_3\frac{\gamma_1+\gamma_2+\mu}{\gamma_2+\mu}\\
a_1(\gamma_1+\mu)&0&\eta_1\gamma_2\\
0&0&-(\gamma_2+\mu)
\end{array}\right].
\]
The determinant of~${\bf J}$ equals
\[
\mbox{Det}({\bf J})=-(\gamma_1+\mu)(\mu^2 + (\delta_3 + \gamma_1 + \gamma_2)\mu + (\delta_3 + \gamma_2)\gamma_1 + \gamma_2\delta_3)a_1.
\]
Since~$a_1>0$ and due to~\eqref{eq:condParms},~$\mbox{Det}({\bf J})<0$, and hence the Jacobian matrix is invertible.  Therefore, by the implicit function theorem, there exists a constant~$\varepsilon_0$ and unique functions~$\tilde S^*(\varepsilon),I_1^*(\varepsilon),I_2^*(\varepsilon)$ defined for~$0<\varepsilon<\varepsilon_0$ such that
${\bf F}(\tilde S^*(\varepsilon),I_1^*(\varepsilon),I_2^*(\varepsilon);\varepsilon)=0$.  Furthermore, these functions can be represented by a series expansion of the form~\eqref{eq:series_S_I}.

\end{proof}
Next, we continue to focus on the case~$\sigma_{21}>0$ and prove that~$\phi^*$ is stable,
\begin{prop}[Stability of~$\phi^*$ for~$\sigma_{21}>0$]\label{prop:CEstable_sigma21positive}
Let~$\sigma_{21}>0$, and let~$\phi^*=(S^*,I_1^*,I_2^*,J_1^*,J_2^*,R_1^*,R_2^*)$ be the steady-state solution of the system~\eqref{eq:model} with~$I_1^*>0$ and~$I_2^*>0$.  Then, for sufficiently large~$\beta_1>\gamma_1$,~$\phi^*$ is locally stable. 
\end{prop}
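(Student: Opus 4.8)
The plan is to exploit the scale separation forced by $\beta_1\gg1$ and to study the Jacobian $\mathbf{J}(\beta_1)$ of \eqref{eq:model} at $\phi^*$ by a fast--slow (singular perturbation) eigenvalue argument. First I would use the conservation law $\frac{\text{d}}{\text{d}t}(S+I_1+I_2+J_1+J_2+R_1+R_2+R_3)=\mu-\mu N$ to restrict to the invariant hyperplane $N=1$ and eliminate $R_3$, so that stability of $\phi^*$ reduces to showing that the seven eigenvalues of the linearization on $\{N=1\}$ have negative real part. Substituting the expansions \eqref{eq:approxCE} of Proposition~\ref{prop:approxCEsigma21positive} into the entries of $\mathbf{J}$ and writing $\beta_1=(\gamma_1+\mu)\mathcal{R}_1$, I would split $\mathbf{J}=\beta_1\mathbf{B}+\mathbf{C}+\mathcal{O}(1/\mathcal{R}_1)$. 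Because $S^*=\mathcal{O}(1/\mathcal{R}_1)$ and $R_2^*=\mathcal{O}(1/\mathcal{R}_1^2)$ while $I_1^*+\eta_1 J_1^*\to a_1>0$, the only $\mathcal{O}(\beta_1)$ entries are the derivatives of the strain-one force of infection, $\partial_S[\beta_1 S(I_1+\eta_1 J_1)]$ and $\partial_{R_2}[\beta_1\sigma_{21}R_2(I_1+\eta_1 J_1)]$; that is, the large entries sit only in the $S$ and $R_2$ columns, which identifies $S$ and $R_2$ as the two fast variables.

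The second step handles the fast (boundary-layer) block. Freezing the slow variables, the $2\times2$ subsystem in $(S,R_2)$ is upper triangular with diagonal entries $-\beta_1(I_1^*+\eta_1 J_1^*)$ and $-\beta_1\sigma_{21}(I_1^*+\eta_1 J_1^*)$, both strictly negative since $\sigma_{21}>0$ and $a_1>0$. Equivalently, $\mathbf{B}$ has rank two with nonzero eigenvalues of these signs and a five-dimensional kernel $\{S=R_2=0\}$ on which $0$ is semisimple. Consequently two eigenvalues of $\mathbf{J}$ diverge like $-\beta_1 a_1$ and $-\beta_1\sigma_{21}a_1$, so they lie in the open left half-plane for all large $\beta_1$.

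The third step is the reduced (slow) block carrying the remaining five eigenvalues, obtained by slaving the fast variables to quasi-steady state, i.e. solving $S'=0$ and $R_2'=0$ for $S,R_2$ as functions of $(I_1,I_2,J_1,J_2,R_1)$ and substituting into the five slow equations (the spectral projection of $\mathbf{C}$ onto $\ker\mathbf{B}$). Two slaving subtleties must be tracked: the relation $\beta_1\sigma_{21}(I_1+\eta_1 J_1)R_2=\gamma_2 I_2+\mathcal{O}(1/\mathcal{R}_1)$ turns an apparently negligible coupling into the $\mathcal{O}(1)$ term $J_1'\approx\gamma_2 I_2-(\gamma_1+\mu)J_1$, and eliminating $R_3$ injects $-\delta_3$ couplings into the $I_1$ row. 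After this reduction the limiting $5\times5$ matrix is block triangular: the rows of $I_2$ and $J_1$ are purely diagonal (they are driven modes that feed forward but receive no $\mathcal{O}(1)$ feedback), contributing the stable eigenvalues $-(\gamma_2+\mu)$ and $-(\gamma_1+\mu)$ and peeling off to leave a $3\times3$ core in $(I_1,J_2,R_1)$,
\[
A=\begin{pmatrix}-(\mu+\gamma_1+\delta_3)&-\delta_3&\delta_1-\delta_3\\[0.4ex]0&0&q\\[0.4ex]\gamma_1&-(\gamma_2+\mu)&-q-(\mu+\delta_1)\end{pmatrix},\qquad q:=\beta_2\sigma_{12}(I_2^*+\eta_2 J_2^*)>0.
\]

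The crux is to prove that $A$ is Hurwitz under \eqref{eq:condParms} and \eqref{eq:beta2cond}. For its characteristic polynomial $\lambda^3+c_1\lambda^2+c_2\lambda+c_3$ the conditions $c_1>0$ (negative trace) and $c_3=-\det A=q\big[(\mu+\gamma_1+\delta_3)(\gamma_2+\mu)+\gamma_1\delta_3\big]>0$ are immediate, and $c_2>0$ follows from $(\mu+\gamma_1+\delta_3)(q+\mu+\delta_1)-\gamma_1(\delta_1-\delta_3)\ge\gamma_1(q+\mu+\delta_3)>0$. The genuinely delicate inequality is $c_1c_2>c_3$: the indefinite combination $\delta_1-\delta_3$ blocks a term-by-term sign check, and I expect to resolve it through the regrouping
\[
c_1c_2-c_3=q(\gamma_2+\mu)(q+\mu+\delta_1)+c_1\big[(\mu+\gamma_1+\delta_3)(q+\mu+\delta_1)-\gamma_1\delta_1\big]+\gamma_1\delta_3(c_1-q),
\]
in which every bracket is positive because $\mu+\gamma_1+\delta_3>\gamma_1$, $q+\mu+\delta_1>\delta_1$ (using $q>0$, equivalently \eqref{eq:beta2cond}), and $c_1-q=2\mu+\gamma_1+\delta_1+\delta_3>0$. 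With $A$ Hurwitz, all five slow eigenvalues have strictly negative real part as $\beta_1\to\infty$; by continuity of the spectrum in $1/\mathcal{R}_1$ they remain there for sufficiently large $\beta_1$, and together with the two diverging fast eigenvalues this shows every eigenvalue of $\mathbf{J}(\beta_1)$ has negative real part, so $\phi^*$ is locally asymptotically stable. The main obstacle is thus twofold: the careful derivation of the reduced $3\times3$ core, keeping precisely those slaving and $R_3$-elimination couplings that survive at $\mathcal{O}(1)$, and the sign bookkeeping establishing the last Routh--Hurwitz inequality.
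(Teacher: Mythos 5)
Your argument is correct and lands on exactly the same limiting spectral data as the paper, but it gets there by a genuinely different organization of the computation. The paper's proof works at the level of the characteristic polynomial: it substitutes \eqref{eq:approxCE} into the Jacobian and expands $P(\lambda)$ in $\varepsilon=1/\mathcal{R}_1$, obtaining $\varepsilon^2P(\lambda)=a_1^2\sigma_{21}(\lambda+\gamma_1+\mu)(\lambda+\gamma_2+\mu)Q(\lambda)+\mathcal{O}(\varepsilon)$ with a cubic $Q$ handled by Routh--Hurwitz, and recovers the two diverging roots by dominant balance from $(x+a_1)(x+\sigma_{21}a_1)$. You work at the matrix level instead: a fast--slow splitting with $S,R_2$ fast, yielding the pair $\approx-\beta_1a_1,\,-\beta_1\sigma_{21}a_1$ (the paper's $x_6/\varepsilon$, $x_7/\varepsilon$ up to normalization; compare the $\sigma_{21}=0$ proposition, where the paper itself writes $\lambda_7^{\rm approx}=-a_1(\gamma_1+\mu)/\varepsilon$), and a slaved $5\times5$ slow block. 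Your reduction is faithful to the paper's limit: with $q=\beta_2\sigma_{12}(I_2^*+\eta_2J_2^*)\to b_2\sigma_{12}(\gamma_2+\mu)$ (which follows from $\beta_2(I_2^*+\eta_2J_2^*)=(\gamma_2+\mu)I_2^*/S^*$ and \eqref{eq:series_S}), your core $A$ satisfies $-\operatorname{tr}A=\alpha_2$, second elementary symmetric function $=\alpha_1$, and $-\det A=q\bigl[(\mu+\gamma_1+\delta_3)(\gamma_2+\mu)+\gamma_1\delta_3\bigr]=\alpha_0$, term by term against \eqref{eq:Pcoeff}, so your cubic \emph{is} the paper's $Q$; the vanishing $(2,2)$ entry is legitimate because $\beta_2\sigma_{12}\eta_2R_1^*-(\gamma_2+\mu)=-(\gamma_2+\mu)\mathcal{R}_2S^*=\mathcal{O}(\varepsilon)$ by \eqref{eq:Ri_func_S}, and the two slaving couplings you flag ($\gamma_2I_2$ entering the $J_1$ row via $R_2'=0$, and the $\delta_3$ terms entering the $I_1$ row via $S'=0$ and $R_3$-elimination) are precisely what make the match work. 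Your explicit regrouping of $c_1c_2-c_3$ is a correct identity (using $c_1-(\mu+\gamma_1+\delta_3)=q+\mu+\delta_1$), each group positive under $q>0$, i.e.\ under \eqref{eq:beta2cond}; this actually supplies a detail the paper only asserts as ``direct computation shows $\alpha_2\alpha_1>\alpha_0$''. What the two routes buy: yours exposes structure --- which variables are fast, an interpretable $(I_1,J_2,R_1)$ core, a sign-transparent Routh--Hurwitz verification --- while the paper's expansion of the full $P(\lambda)$ with explicit $P_1,P_2,P_3$ controls the error orders uniformly and is easier to certify by computer algebra. Two cosmetic repairs: the $J_1$ row of your limiting slow matrix is not ``purely diagonal'' (it carries the $\gamma_2$ entry in the $I_2$ column); the correct statement, which your parenthetical intends, is that $(I_2,J_1)$ form a feed-forward triangular block receiving no $\mathcal{O}(1)$ feedback, so $-(\gamma_2+\mu)$ and $-(\gamma_1+\mu)$ still peel off. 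And the two-cluster eigenvalue splitting (five eigenvalues near the spectrum of the projected slow block, two near $\beta_1$ times the nonzero eigenvalues of $\mathbf{B}$) deserves a citation to standard perturbation theory for a semisimple zero eigenvalue, or a Rouch\'e argument on the characteristic polynomial --- though this is at the same level of rigor as the paper's own dominant-balance step.
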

\begin{proof}
By Proposition~\ref{prop:approxCEsigma21positive},~$\phi^*$ satisfies~\eqref{eq:approxCE}.  Substituting approximation~\eqref{eq:approxCE} of the multi-strain endemic equilibrium point,~$\phi^*$, into the Jacobian matrix of~\eqref{eq:model} and computing the corresponding characteristic polynomial~$P(\lambda)$ yields
\begin{subequations}\label{eq:P}\begin{equation}
\varepsilon^2P(\lambda)=a_1^2\sigma_{21}P_0(\lambda)+\varepsilon P_1(\lambda)+\varepsilon^2 P_2(\lambda)+\varepsilon^3 P_3(\lambda)+\mathcal{O}(\varepsilon^4),\qquad \varepsilon=\frac1{\mathcal{R}_1},\quad 0<\varepsilon\ll1,
\end{equation}
where
\begin{equation}\begin{split}
P_0(\lambda)&=(\lambda+\gamma_1+\mu)(\lambda+\gamma_2+\mu)Q(\lambda),\quad Q(\lambda)=\lambda^3+\alpha_2\lambda^2+\alpha_1\lambda+\alpha_0,\\
P_1(\lambda)&=(\sigma_{21} + 1)a_1\lambda^6+c_{1,5}\lambda^6+\cdots +c_{1,0},\\ P_2(\lambda)&=\lambda^7 +c_{2,6}\lambda^6+\cdots +c_{2,0},\\ P_3(\lambda)&=c_{3,6}\lambda^6+\cdots +c_{3,0},\quad c_{3,6}\ne0,
\end{split}
\end{equation}
and
\begin{equation}\label{eq:Pcoeff}
\begin{split}
\alpha_2&=b_2(\mu + \gamma_2)\sigma_{12} + 2\mu + \delta_1 + \delta_3 + \gamma_1, \\
\alpha_1&=(2b_2\sigma_{12} + 1)\mu^2 + (b_2(\delta_3 + \gamma_1 + 3\gamma_2)\sigma_{12} + \delta_1 + \delta_3 + \gamma_1)\mu + b_2\gamma_2(\delta_3 + \gamma_1 + \gamma_2)\sigma_{12} + \delta_3(\delta_1 + \gamma_1)= \\
&b_2\sigma_{12}(2\mu^2+(\delta_3 + \gamma_1 + 3\gamma_2)\gamma_2+\mu(\delta_3 + \gamma_1 + \gamma_2))+\mu^2 + (\delta_1 + \delta_3 + \gamma_1)\mu + \delta_3(\delta_1 + \gamma_1)\\
\alpha_0&=b_2\sigma_{12}[\mu^2 + (\delta_3 + \gamma_1 + \gamma_2)\mu + (\delta_3 + \gamma_1)\gamma_2 + \gamma_1\delta_3)(\mu + \gamma_2)],\\
c_{j,0}&\ne0\quad j=1,2,3.
\end{split}
\end{equation}
\end{subequations}
Since~$b_2>0$, see~\eqref{eq:b2a1}, and since~$\delta_i>0$ or~$\mu>0$, it follows that the coefficients,~$\alpha_j>0$, $j=0,1,2$, are all strictly positive.

The characteristic polynomial~$P(\lambda)$ has overall seven roots.  Five of the roots~$\lambda_i$ of~$P(\lambda)$ identify, to leading order, with the roots of~$P_0(\lambda)$.  These roots satisfy~$\lambda_i=\lambda_i^{\rm approx}+\mathcal{O}(\varepsilon)$ where 
\begin{subequations}\label{eq:lambdaapprox}    
\begin{equation}
\lambda_i^{\rm approx}=-\mu-\gamma_j<0,\quad i=1,2,
\end{equation}
and~$\{\lambda_i^{\rm approx}\}_{i=3}^5$ are the roots of
\begin{equation}
Q(x)=x^3+\alpha_2x^2+\alpha_1x+\alpha_0=0,
\end{equation}
see~\eqref{eq:P}.
The coefficients of~$Q(x)$ are all strictly positive, see~\eqref{eq:Pcoeff}, and direct computation shows that~$\alpha_2\alpha1>\alpha_0$.  Thus, the Routh–Hurwitz criterion implies that all roots of~$Q(x)$ have a negative real part.

Dominant balance shows that~$P(\lambda)$ has two additional roots that satisfy
\begin{equation}
\lambda_i^{\rm approx}=\frac{x_i}{\varepsilon},\quad \frac{|\lambda_i^{\rm approx}-\lambda_i|}{|\lambda_i|}=\mathcal{O(\varepsilon)},\qquad i=6,7,
\end{equation}
\end{subequations}
where~$x_i$ are the roots of
\[
x^2+(\sigma_{21} + 1)a_1x+a_1^2
\sigma_{21}=(x+\sigma_{21}a_1)(x+a_1).
\]
Hence,
\[
x_6=-\sigma_{21}a_1<0,\quad x_7=-a_1<0.
\]
Overall, all roots have a negative real part for sufficiently large~$\beta_1>\gamma_1$. Therefore,~$\phi^*$ is linearly stable.
\end{proof}

Finally, we consider the case~$\sigma_{21}=0$.  The proofs of the results, in this case, follow closely the proofs of Propositions~\ref{prop:approxCEsigma21positive} and~\ref{prop:CEstable_sigma21positive}.
\begin{prop}[Existence, uniqueness and approximation of~$\phi^*$ for~$\sigma_{21}=0$]\label{prop:approxCEsigma21zero}
Let~$\sigma_{21}=0$.  Under the conditions of Theorem~\ref{thm:phi*stability}, if~$\sigma_{12}>0$, then there exists a unique steady-state solution $$\phi^*=(S^*,I_1^*,I_2^*,J_1^*,J_2^*,R_1^*,R_2^*)$$ of the system~\eqref{eq:model} with~$I_1^*>0$ and~$I_2^*>0$.  This solution satisfies~\eqref{eq:approxCE}.
\end{prop}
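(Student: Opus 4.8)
The plan is to mirror the proof of Proposition~\ref{prop:approxCEsigma21positive}, replacing the relations supplied by Lemma~\ref{lem:reduction2Systemof3} by their $\sigma_{21}=0$ forms. By that lemma, $\phi^*$ again satisfies the reduced system~\eqref{eq:eq4CE}, but now $J_1^*\equiv0$ and $R_2^*=\frac{\gamma_2}{\delta_2+\mu}I_2^*$, see~\eqref{eq:J1R2_func_S_Ii_sigma21_zero}, while the expressions for $J_2^*(S^*,I_1^*)$ and $R_1^*(S^*)$ in~\eqref{eq:Ji_func_S_Ii}--\eqref{eq:Ri_func_S} are unchanged. I would then seek a solution of~\eqref{eq:eq4CE} in the form of the expansion~\eqref{eq:series_S_I} with $\varepsilon=1/\mathcal{R}_1$.

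The order-by-order matching simplifies considerably here. Since $J_1^*=0$, equation~\eqref{eq:eq4I1reduced} reduces to $\beta_1 S^* I_1^*=(\mu+\gamma_1)I_1^*$, so $I_1^*>0$ forces $S^*=1/\mathcal{R}_1=\varepsilon$ \emph{exactly}. Hence $s_0=0$, $s_1=1$, and $s_k=0$ for $k\ge2$; in particular $s_2=0$, in agreement with~\eqref{eq:s2b1_sigma21_zero}. Substituting $S^*=\varepsilon$ into~\eqref{eq:eq4I2reduced} and collecting the $\mathcal{O}(1)$ and $\mathcal{O}(\varepsilon)$ terms yields $a_2=0$ together with the leading coefficients $a_1$ and $b_2$ of~\eqref{eq:b2a1}, exactly as in the $\sigma_{21}>0$ case, since $J_2^*$ and $R_1^*$ obey the same formulas there. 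The remaining balance comes from the replenishment equation~\eqref{eq:eq4Sreduced}: expanding it to $\mathcal{O}(\varepsilon)$ and inserting $R_2^*=\frac{\gamma_2}{\delta_2+\mu}I_2^*$, which is $\mathcal{O}(\varepsilon)$ rather than the $\mathcal{O}(\varepsilon^2)$ quantity appearing when $\sigma_{21}>0$, produces the modified coefficient $b_1$ recorded in~\eqref{eq:s2b1_sigma21_zero}. The requirement $I_2^*>0$ is equivalent to $b_2>0$, which by~\eqref{eq:b2a1} is precisely condition~\eqref{eq:beta2cond}; this in turn forces $a_1>0$, hence $I_1^*>0$, for sufficiently large $\beta_1>\gamma_1$.

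To upgrade the formal expansion to a rigorous existence and uniqueness statement, I would invoke the implicit function theorem exactly as in the proof of Proposition~\ref{prop:approxCEsigma21positive}: define the reduced map ${\bf F}(\tilde S^*,I_1^*,I_2^*;\varepsilon)$ from~\eqref{eq:eq4CE} with $S^*=\frac{\gamma_1+\mu}{\beta_1}(1+\tilde S^*)$, verify that at $\varepsilon=0$ it has the unique root $(\tilde S^*,I_1^*,I_2^*)=(0,a_1,0)$, and check that its Jacobian there is nonsingular. The main point requiring care---and the step I expect to be the chief obstacle---is that the $\sigma_{21}=0$ substitutions alter the entries of this Jacobian: the coupling through $J_1^*$ disappears because $J_1^*\equiv0$, and $R_2^*$ now contributes through $\frac{\gamma_2}{\delta_2+\mu}I_2^*$ rather than the quantity linear in $\tilde S^*$ that arises when $\sigma_{21}>0$. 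The $3\times3$ Jacobian must therefore be recomputed from scratch rather than read off by analogy, and one must confirm that removing this coupling does not make its determinant vanish. I expect the determinant to again be a strictly negative multiple of $a_1$; combined with $a_1>0$ (which holds once~\eqref{eq:beta2cond} is in force) and~\eqref{eq:condParms}, this gives invertibility. The implicit function theorem then yields unique analytic branches $S^*(\varepsilon),I_1^*(\varepsilon),I_2^*(\varepsilon)$ defined for small $\varepsilon$ and admitting the expansion~\eqref{eq:series_S_I}, namely~\eqref{eq:approxCE}, completing the proof.
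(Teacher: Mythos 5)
Your proposal is correct and follows essentially the same route as the paper: the reduction of Lemma~\ref{lem:reduction2Systemof3} with the $\sigma_{21}=0$ formulas, the observation that $J_1^*\equiv0$ forces $S^*=1/\mathcal{R}_1$ exactly (hence $s_2=0$), the same order-by-order matching yielding $a_2=0$, $a_1$, $b_2$ and the modified $b_1$, and an implicit-function-theorem argument with positivity following from $b_2>0$, i.e.~condition~\eqref{eq:beta2cond}. The only deviation is minor: the paper uses $S^*=\varepsilon$ exactly to eliminate $\tilde S^*$ as an unknown, so its Jacobian reduces to an upper-triangular, manifestly invertible $2\times2$ matrix in $(I_1^*,I_2^*)$, whereas you retain the $3\times3$ formulation --- which also works, and your expectation that its determinant is a strictly negative multiple of $a_1$ is borne out, since the row coming from the $I_1$-equation at $\varepsilon=0$ is $\bigl((\gamma_1+\mu)a_1,0,0\bigr)$ and expanding along it reproduces the paper's $2\times2$ minor.
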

\begin{proof}
By Lemma~\ref{lem:reduction2Systemof3},~$\phi^*$ satisfies~\eqref{eq:eq4CE}.  
In the case,~$\sigma_{21}=0$, it follows that~$J_1^*=0$ see~\eqref{eq:eqJ1} and thus~$S^*=1/\mathcal{R}_1$ when~$I_1^*>0$, see~\eqref{eq:eqI1}.
Denote~$\varepsilon=1/\mathcal{R}_1$ and consider an expansion of the form
\begin{equation}\label{eq:series_S_I_sigma21_zero}
S^*=\varepsilon,\quad I^*_i=a_{i}+\varepsilon \tilde b_{i}+\mathcal{O}(\varepsilon^2),\qquad \sigma_{21}=0.
\end{equation}
Substituting~\eqref{eq:series_S_I_sigma21_zero} in~\eqref{eq:eq4I2reduced}, yields at leading order~$a_2=0$, and at the next order
\[
a_1=\frac{(\gamma_2+\mu)\sigma_{12}b_2+(\delta_1+\mu)}{\mathcal{R}_2\eta_2\gamma_1},
\]
as in the case of~$\sigma_{21}>0$.
Further substituting~\eqref{eq:series_S_I_sigma21_zero} into~\eqref{eq:eq4Sreduced} yields~\eqref{eq:s2b1_sigma21_zero}.
As in the case of~$\sigma_{21}>0$, condition~\eqref{eq:beta2cond} ensures~$I_1^*,I_2^*>0$.

The above computation uniquely determines the values of the coefficients in~\eqref{eq:approxCE}.  Following the proof of Proposition~\ref{prop:approxCEsigma21positive}, we now use the implicit function theorem to show that there exists a unique solution to~\eqref{eq:eq4Sreduced}, which can be expressed as an expansion~\eqref{eq:series_S_I_sigma21_zero} up to arbitrary order, by showing that the relevant Jacobian matrix is invertible:  
In the case~$\sigma_{21}=0$,~$S(\varepsilon)=\varepsilon$.  Therefore, 
the Jacobian matrix~${\bf J}$ of~\eqref{eq:eq4Sreduced} at~$\varepsilon=0$ reduces to an invertible $2\times2$ matrix 
\[
\left.{\bf J}\right|_{(I_1^*(0),I_2^*(0))}=\left[\begin{array}{cc}
-(\gamma_1+\mu)-\delta_3\frac{\gamma_1+\gamma_2+\mu}{\gamma_2+\mu}&\delta_3\frac{-(\gamma_1+\gamma_2+\mu)\delta_3+\gamma_2\delta_2}{\delta_2+\mu}\\
0&-(\gamma_2+\mu)
\end{array}\right].
\]
\end{proof}

\begin{prop}[Stability of~$\phi^*$ for~$\sigma_{21}>0$]
Let~$\sigma_{21}=0$, and let~$\phi^*=(S^*,I_1^*,I_2^*,J_1^*,J_2^*,R_1^*,R_2^*)$ be the steady-state solution of the system~\eqref{eq:model} with~$I_1^*>0$ and~$I_2^*>0$.  Then, for sufficiently large~$\beta_1>\gamma_1$,~$\phi^*$ is locally stable. 
\end{prop}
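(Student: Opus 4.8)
The plan is to follow the proof of Proposition~\ref{prop:CEstable_sigma21positive} verbatim in its overall architecture, using the approximation of~$\phi^*$ supplied by Proposition~\ref{prop:approxCEsigma21zero} to control the Jacobian of~\eqref{eq:model} asymptotically in~$\varepsilon=1/\mathcal{R}_1$. The first simplification I would exploit is that, when~$\sigma_{21}=0$, the linearization of~\eqref{eq:eqJ1} about~$\phi^*$ reads~$J_1^\prime=-(\mu+\gamma_1)J_1$, so the~$J_1$ compartment decouples entirely and contributes a single eigenvalue~$\lambda=-(\mu+\gamma_1)<0$. It therefore suffices to establish that the remaining spectrum lies in the open left half-plane.

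Second, I would substitute expansion~\eqref{eq:approxCE} --- now with~$s_2=0$ and~$J_1^*=0$ --- into the Jacobian and compute the characteristic polynomial~$P(\lambda)$, multiplying through by an appropriate power of~$\varepsilon$ to obtain a regular perturbation series~$\varepsilon^kP(\lambda)=P_0(\lambda)+\varepsilon P_1(\lambda)+\cdots$ as in~\eqref{eq:P}. The decisive difference from the case~$\sigma_{21}>0$ is that the leading coefficient~$a_1^2\sigma_{21}$ in~\eqref{eq:P} vanishes, so the fast quadratic~$x^2+(\sigma_{21}+1)a_1x+a_1^2\sigma_{21}$ degenerates to~$x(x+a_1)$. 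Consequently I expect exactly one fast root scaling as~$-a_1/\varepsilon<0$, while the companion large root of the~$\sigma_{21}>0$ analysis migrates to the slow scale and must be absorbed into the finite-$\lambda$ factor.

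Third, I would identify the slow roots, to leading order, with the roots of a low-degree factor~$Q(\lambda)$ whose coefficients are explicit polynomials in the model parameters and in~$b_2$. Since~$b_2>0$ under condition~\eqref{eq:beta2cond} and since~$\max\{\mu,\delta_1,\delta_2,\delta_3\}>0$ by~\eqref{eq:condParms}, I would check that all coefficients of~$Q$ are strictly positive and that the Routh--Hurwitz determinant inequalities hold, placing all slow roots in the left half-plane. Together with the decoupled eigenvalue~$-(\mu+\gamma_1)$ and the fast root~$-a_1/\varepsilon$, this yields a spectrum contained in the open left half-plane for all sufficiently large~$\beta_1>\gamma_1$, proving local stability.

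The principal obstacle I anticipate is bookkeeping rather than conceptual: after the~$J_1$ decoupling and the degeneration of the fast quadratic, I must reassemble the characteristic polynomial correctly, track precisely which root crosses from the fast scale to the slow scale, and recompute the coefficients of the slow factor~$Q(\lambda)$, since the simplifications~$s_2=0$ and~$J_1^*=0$ alter several of the~$\alpha_j$ relative to~\eqref{eq:P}. Verifying the positivity of these modified coefficients, and hence the Routh--Hurwitz conditions, directly from~\eqref{eq:condParms} and~$b_2>0$ is the step that will require the most care.
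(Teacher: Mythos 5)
Your proposal follows essentially the same route as the paper's proof: substitute the expansion from Proposition~\ref{prop:approxCEsigma21zero} into the Jacobian, expand the characteristic polynomial in~$\varepsilon=1/\mathcal{R}_1$ as in~\eqref{eq:P_sigma21_zero}, identify six slow roots --- the explicit negative factors $-(\gamma_1+\mu)$ (your decoupled $J_1$ eigenvalue), $-(\gamma_2+\mu)$, and $-(\delta_2+\mu)$ (precisely the root you predicted migrates from the fast scale as the quadratic degenerates), together with a cubic~$Q(\lambda)$ whose coefficients are positive in~$b_2>0$ and satisfy the Routh--Hurwitz inequality --- plus a single fast root found by dominant balance. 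The only discrepancy is immaterial: the paper's fast root is $-a_1(\gamma_1+\mu)/\varepsilon$ rather than your $-a_1/\varepsilon$ (a normalization of the series prefactor), and both are negative of order~$1/\varepsilon$, so your plan matches the paper's argument.
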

\begin{proof}
Substituting~\eqref{eq:approxCE} in the Jacobian matrix of~\eqref{eq:model} and computing the corresponding characteristic polynomial~$P(\lambda)$ yields
\begin{subequations}\label{eq:P_sigma21_zero}\begin{equation}
    \label{eq:P2}
\varepsilon P(\lambda)=a_1 (\gamma_1+\mu)P_0(\lambda)+\varepsilon P_1(\lambda)+\varepsilon^2 P_2(\lambda)+\mathcal{O}(\varepsilon^3),\qquad \varepsilon=\frac1{\mathcal{R}_1},\quad 0<\varepsilon\ll1,
\end{equation}
where
\begin{equation}\begin{split}
P_0(\lambda)&=(\lambda+\gamma_1+\mu)(\lambda+\gamma_2+\mu)(\lambda+\delta_2+\mu)Q(\lambda),\quad Q(\lambda)=\lambda^3+\alpha_2\lambda^2+\alpha_1\lambda+\alpha_0,\\
P_1(\lambda)&=\lambda^7+c_{1,6}\lambda^6+c_{1,5}\lambda^5+\cdots +c_{1,0},\\ P_2(\lambda)&=c_{2,6}\lambda^6+\cdots +c_{2,0},
\end{split}
\end{equation}
and
\begin{equation}\begin{split}
\alpha_1&=\gamma_1+\delta_1+\delta_3+2\mu+(\gamma_2+\mu)\sigma_{12}b_2,\\ \alpha_2&=(\gamma_2+\mu)(\gamma_1 + \gamma_2+\delta_3+2\mu)\sigma_{12}b_2 + (\gamma_1+\delta_1+\mu)(\delta_3+\mu),\\
\alpha_3&=(\gamma_2+\mu)((\gamma_1+\delta_3)\gamma_2+\gamma_1\delta_3 + (\gamma_1 + \gamma_2+\delta_3)\mu + \mu^2)\sigma_{12}b_2.
\end{split}
\end{equation}
\end{subequations}
The coefficients of~$Q(x)$ are all strictly positive, and direct computation shows that~$\alpha_2\alpha1>\alpha_0$.  Thus, the Routh–Hurwitz criterion implies that all roots of~$Q(x)$ have a negative real part.

Dominant balance and equation of orders shows that~$P(\lambda)$ has an additional root 
\begin{equation}
\lambda_7^{\rm approx}=-\frac{a_1(\gamma_1+\mu)}{\varepsilon}<0,\qquad \frac{|\lambda_7^{\rm approx}-\lambda_7|}{|\lambda_7|}=\mathcal{O(\varepsilon)}.
\end{equation}
Overall, all roots have a negative real part for sufficiently large~$\beta_1>\gamma_1$. Therefore,~$\phi^*$ is linearly stable
\end{proof}
\section{Numerical validation of Theorem 2.2}\label{app:numericalVerification}
The asymptotic approximations presented in Theorem~\ref{thm:phi*stability} are valid for sufficiently large values of~$\mathcal{R}_1$. 
In what follows we validate the asymptotic approximation~\eqref{eq:approxCE} of Theorem~\ref{thm:phi*stability}, and show it is valid already for moderate values of~$\mathcal{R}_1$.
To do so, we compute numerically~$\phi^*$ for the values of the parameters used in the example presented in  Figure~\ref{fig:example_balanced_invasion},
and then compute the error in the approximation~\eqref{eq:approxCE}.
We note that we have considered additional sets of parameters and have reached similar results (data not shown).

Figure~\ref{fig:phiStarErr} presents the error in the approximation~\eqref{eq:approxCE} for~$\phi^*$.  As expected, we observe that~$|S^*-\varepsilon-s_2\varepsilon^2|=O(\varepsilon^3)$ and~$|I_i^*-a_i-b_i\varepsilon|=O(\varepsilon^2)$.  Formally the asymptotic results are valid for sufficiently small values of~$\varepsilon$.  Yet, we observe that these results are valid also for~$\varepsilon\approx0.5$, i.e., for~$\mathcal{R}_1\approx 2$.
\begin{figure}[ht!]
\begin{center}
\includegraphics[width=0.8\textwidth]{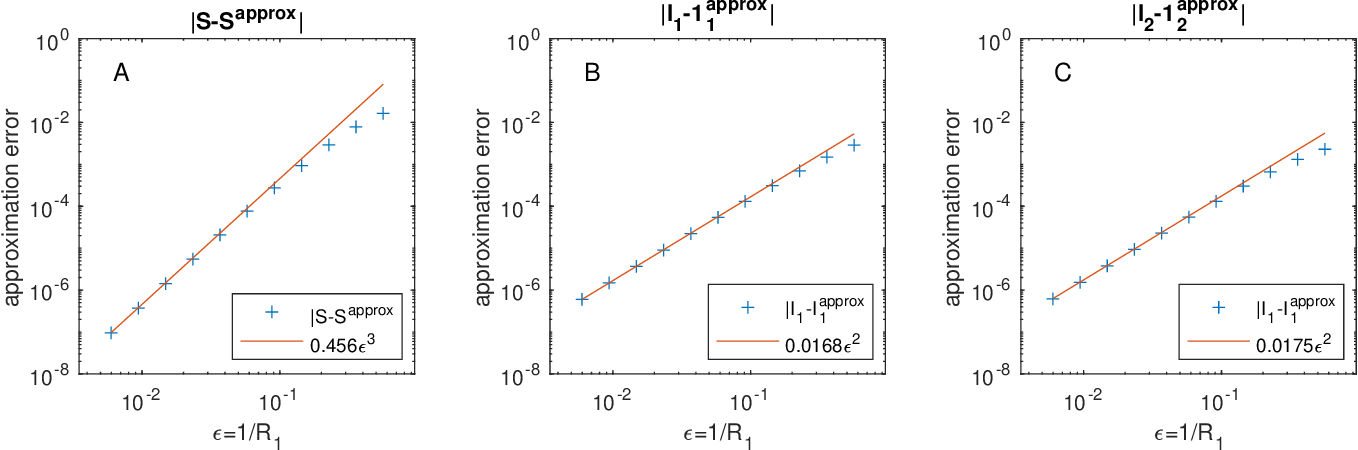}
\caption{Error in approximation~\eqref{eq:approxCE} for the parameters used in Figure~\ref{fig:example_balanced_invasion}. 
A: Error~$|S^*-\varepsilon-s_2\varepsilon^2|$ (red markers), super-imposed is the graph of~$0.456\varepsilon^3$.  B: Error~$|I_1^*-a_1-b_1\varepsilon|$.  C:  Error~$|I_2^*-b_2\varepsilon^2|$. Super-imposed in B and C are the curves proportional to~$\varepsilon^2$ that best fit the data at the left end of the graph.}
\label{fig:phiStarErr}
\end{center}
\end{figure}

\bibliographystyle{plain}

\end{document}